\begin{document}
%
\title{Network localization is unalterable by infections in bursts}
%
%
%
%

\author{Qiang~Liu and
        Piet~Van~Mieghem
\IEEEcompsocitemizethanks{\IEEEcompsocthanksitem Q. Liu and P. Van Mieghem are with the Faculty of Electrical Engineering, Mathematics and Computer Science, Delft University of Technology, Delft, the Netherlands.\protect\\
E-mail: \{Q.L.Liu, P.F.A.VanMieghem\}@TUDelft.nl}
}

\IEEEtitleabstractindextext{%
\begin{abstract}
To shed light on the disease localization phenomenon, we study a bursty susceptible-infected-susceptible (SIS) model and analyze the model under the mean-field approximation. In the bursty SIS model, the infected nodes infect all their neighbors periodically, and the near-threshold steady-state prevalence is non-constant and maximized by a factor equal to the largest eigenvalue $\lambda_1$ of the adjacency matrix of the network. We show that the maximum near-threshold prevalence of the bursty SIS process on a localized network tends to zero even if $\lambda_1$ diverges in the thermodynamic limit, which indicates that the burst of infection cannot turn a localized spreading into a delocalized spreading. Our result is evaluated both on synthetic and real networks.
\end{abstract}

\begin{IEEEkeywords}
Complex networks, Localization, Epidemic process, Susceptible-infected-susceptible model
\end{IEEEkeywords}}

\maketitle

\IEEEdisplaynontitleabstractindextext

%
\IEEEpeerreviewmaketitle

\IEEEraisesectionheading{\section{Introduction}\label{sec:introduction}}

%
%
%
%
\IEEEPARstart{T}{he} near-threshold behavior, i.e. the behavior around the threshold where a phase transition occurs, is of great interest in the study of dynamical processes, because many real complex systems may operate near the phase transition point \cite{nykter2008gene,furusawa2012adaptation,kitzbichler2009broadband}. One of the most extensively studied dynamical processes in network science is the susceptible-infected-susceptible (SIS) spreading process \cite{pastor2001epidemic,pastor2015epidemic}. For some networks, the SIS epidemic remains restricted into a small subnetwork and does not spread over the whole network for infection strength just above the (mean-field) epidemic threshold. This restricted spreading phenomenon is known as the (metastable) localization of the SIS process \cite{goltsev2012localization,ferreira2016metastable,de2017disease}, and has been studied recently. de Arruda \emph{et al.} \cite{de2017disease} investigated the localization phenomenon of SIS processes on multiplex networks. Sahneh \emph{et al.} \cite{sahneh2016delocalized} focused on the localization by a maximum entropy and optimization approach. Another near-threshold behavior, called \emph{Griffiths' phase} \footnote{The terminology \emph{Griffiths phase} is borrowed from the study of Ising ferromagnet. Griffiths finds that the magnetization of a random Ising ferromagnet is not an analytic function of external field $H$ at $H=0$ between the critical temperatures of the random and the corresponding pure Ising ferromagnet \cite{griffiths1969nonanalytic}, but in the study of epidemic processes, the non-analyticity of the function of the prevalence just above the epidemic threshold in the thermodynamic limit is still unknown.} of the SIS process, which is related to localization, is studied by Cota \emph{et al.} \cite{cota2016griffiths} and Mu{\~n}oz \emph{et al.} \cite{munoz2010griffiths}. The near-threshold behavior of the SIS process has also been applied to explain the operation of brain \cite{moretti2013griffiths}.

In this paper, we further study the SIS localization phenomenon. In previous studies \cite{goltsev2012localization,ferreira2016metastable}, localization of epidemic processes means that only a finite number of nodes is infected in the thermodynamic limit, i.e. when the network size $N\rightarrow \infty$. In this work, the definition of epidemic localization is that the average fraction of infected nodes, i.e. the prevalence, tends to zero in the thermodynamic limit, but the number of infected nodes is not necessarily finite. In the following part, we first clarify some misconceptions about the SIS localization in previous studies and show the availability of mean-field methods \cite{goltsev2012localization,ferreira2016metastable,van_mieghem_virus_2009,castellano2010thresholds}. We point out that the order of the near-threshold prevalence as a function of the network size $N$ is essential for understanding the influence of the network structure on spreading processes. Motivated by the essence of the prevalence order, we confine ourselves to a mean-field approximation and study a bursty spreading effect which maximizes the near-threshold prevalence by a factor equal to the largest eigenvalue $\lambda_1$ of the adjacency matrix of the network. Even though the spectral radius $\lambda_1$ diverges with network size $N$, the spreading bursts cannot change a localized spreading to a delocalized one if the principal eigenvector of the adjacency matrix of the network is localized.
\section{Misconceptions and conclusions about the epidemic localization}
In the SIS process, each node can be either infected or susceptible (healthy). An infected node can infect each healthy neighbor with infection rate $\beta$ and an infected node is spontaneously cured with curing rate $\delta$. The network is represented by the adjacency matrix $A$ with elements $a_{ij}$ for $i,j\in\{1,\ldots,N\}$. If node $i$ and $j$ are connected and $i\neq j$, then $a_{ij}=a_{ji}=1$; otherwise, $a_{ij}=a_{ji}=0$. The whole network can be in two different phases in the steady or metastable state: (a) in the all-healthy phase or (b) in the endemic phase. In the all-healthy phase, the epidemic has disappeared. In the endemic phase, the infection can persist in the network. The SIS process experiences a phase transition at a threshold \cite{van_mieghem_virus_2009,castellano2010thresholds}, which can be determined by the mean-field method $\tau_c^{(1)}=1/\lambda_1$. If the effective infection rate $\tau\triangleq \beta/\delta>\tau_c^{(1)}$, then the process is in the endemic phase under mean-field theory; otherwise, in the all-healthy phase.

For a finite network, the endemic and all-healthy phases can be identified by the \emph{prevalence}, which is the average fraction of infected nodes, and can be considered as an order parameter for the SIS process. A non-zero prevalence implies the endemic phase and a zero prevalence means the all-healthy phase. However, in the thermodynamic limit where the network size $N\rightarrow\infty$, a zero prevalence does not necessarily coincide with an all-healthy state just above the epidemic threshold. Goltsev \emph{et al.} \cite{goltsev2012localization} considered the zero prevalence in the thermodynamic limit as an indication of the localization phenomenon of the SIS process, where only a finite number of nodes are infected on average.
In particular, Goltsev \emph{et al.} \cite{goltsev2012localization} evaluate the steady-state prevalence $y_\infty(\tilde{\tau})$ just above the mean-field epidemic threshold by its first-order expansion $y_{\infty}(\tilde{\tau})=a\tilde{\tau}+o(\tilde{\tau})$ with \cite{van2012epidemic}
\begin{equation}\label{eq_SIS_critical_coefficient}
a=\frac{\sum_{i=1}^{N}x_i}{N\sum_{i=1}^{N}x_i^3}
\end{equation}
where $x_i$ is the $i$th component of the principal eigenvector of the adjacency matrix, obeying the normalized condition $\sum_{i=1}^{N}x_i^2=1$ and $\tilde{\tau}\triangleq \tau/\tau_c^{(1)}-1 \ll 1$ is the normalized effective infection rate. A tight bound of $a$ is $\frac{\min_i x_i}{\max_i x_i}<a<\frac{1}{\min_i x_i\sqrt{N}}$ as derived in Appendix~B. If $a\rightarrow 0$ as $N\rightarrow\infty$, then the near-threshold prevalence is zero, and if $a>0$ as $N\rightarrow\infty$, then a non-zero fraction of nodes are infected just above the threshold. Goltsev \emph{et al.} \cite{goltsev2012localization} define localization by the inverse participant ratio (IPR) $\eta(x)=\sum_{i=1}^{N}x_i^4$ of the principal eigenvector $x$, and state that if the IPR $\eta(x)=O(1)$, then the principal eigenvector $x$ is localized in a few components $x_i=O(1)$ and only a finite number of nodes are infected in the network with $a\rightarrow 0$ as $N\rightarrow\infty$. Otherwise, if $\eta(x)=o(1)$, then the vector $x$ is delocalized such that each component $x_i=O(\frac{1}{\sqrt{N}})$. Ferreira \emph{et al.} \cite{ferreira2016metastable} argue that if a finite number of nodes are infected using mean-field theory, then the virus eventually dies out and then the mean-field approximations \cite{pastor2001epidemic,van_mieghem_virus_2009} fail due to their omission of the absorbing state.

However, a zero prevalence in the thermodynamic limit does not necessarily mean that the number of infected nodes is finite. To illustrate this fact, let us consider a scale-free network which follows a power-law degree distribution with exponent $\gamma$, i.e. $\Pr[D=k]=\frac{k^{-\gamma}}{\zeta(\gamma)}$, $k\in\mathbb{N}$ and $\zeta(\gamma)$ is the Riemann zeta function \cite{titchmarsh1986theory}, in the thermodynamic limit. If the average degree of a scale-free network is finite, then $\gamma>2$ for $N\rightarrow\infty$, because $E[D^m]=\zeta(\gamma-m)/\zeta(\gamma)$ converges when $\gamma>m+1$. The maximum degree scales as $d_{\text{max}}=O(N^{1/(\gamma-1)})$ as derived in \cite[p.~594]{van_mieghem_performance_2014}, and thus we may find nodes with degree $O(N^\alpha)$ for $\alpha<1/(\gamma-1)$. Given a constant $c$, the expected number of nodes $\bar{n}_d$ with degree $d=[cN^\alpha]$ is $\bar{n}_d=N\Pr[D=[cN^\alpha]]=(c^{-\gamma}N^{1-\alpha\gamma})/\zeta(\gamma)$.
If $0<\alpha<\frac{1}{\gamma}$, then $\lim\limits_{N\rightarrow\infty}\bar{n}_d=\infty$. Thus, the average
number of hubs diverges. For each hub with degree of the order $O(N^\alpha)$ for $\alpha>0$, the local star subgraph ensures that the infection can persist for the effective infection rate $\tau>0$ in the thermodynamic limit. Related discussions can be found in \cite{chatterjee2009contact,boguna2013nature}, where the epidemic threshold of power-law networks is shown to be zero in the thermodynamic limit.

Furthermore, the principal eigenvector $x$ may not be localized in a finite subgraph, but localized in a subgraph whose size increases as $O(N^\alpha)$ with $0<\alpha<1$ with $N$. Pastor-Satorras and Castellano \cite{pastor2016distinct,pastor2018eigenvector} define the vector $x$ to be delocalized, only when the IPR $\eta(x)=O(N^{-1})$, while if $\eta(x)=O(N^{-\alpha})$ with $0\leq\alpha<1$, then $x$ is localized on a subgraph of size order of $O(N^\alpha)$. An example that can be exactly evaluated is the star-like, two-hierarchical graph \cite[p.~143]{van2010graph}. In this graph, there are $m$ fully connected nodes, and each node as hub is connected to $m$ leaf nodes. Basically, the graph consists of $m$ fully meshed $m$-stars. The network size is $N=m^2+m$ and the average degree is $d_{av}=3-\frac{4}{m+1}\approx 3$ for a large network. The largest eigenvalue $\lambda_1$ of the graph is $m$ as derived in \cite[p.~145]{van2010graph}, which is actually well approximated by the degree of each node in the maximum $K$-core \cite{castellano2017relating}. One may verify that the principal eigenvector
$$x=[\underbrace{\frac{1}{\sqrt{m+1}},\ldots,\frac{1}{\sqrt{m+1}}}_{m},\underbrace{\frac{1}{m\sqrt{m+1}},\ldots,\frac{1}{m\sqrt{m+1}}}_{m^2}]^T$$
is localized on a clique with size in the order of $O(1/\sqrt{N})$ and the IPR $\eta(x)=O(N^{-0.5})$. In this graph, the coefficient $a=O(\frac{1}{\sqrt{N}})$ leads to a zero prevalence, but the average number of infected nodes $Ny_\infty(\tilde{\tau})=O(\sqrt{N})$ diverges in the thermodynamic limit.

Even if the principal eigenvalue $x$ is localized in a finite subgraph and the IPR $\eta(x)=O(1)$, the average number of infected nodes may not be finite in the thermodynamic limit. Let us consider the extreme case of a star graph, whose principal eigenvector is $x=[\frac{1}{\sqrt{2}},\frac{1}{\sqrt{2(N-1)}}, \ldots, \frac{1}{\sqrt{2(N-1)}}]^T$. We may verify that the IPR $\eta(x)=O(1)$ and the coefficient $a=O(1/\sqrt{N})$. The average number of infected nodes is $Ny_\infty(\tilde{\tau})=O(\sqrt{N})$. Thus, just above the epidemic threshold (see also \cite{cator2013susceptible} for an exact, asymptotic analysis), an infinite number of nodes is infected, but the prevalence $y_\infty(\tilde{\tau})=O(\frac{1}{\sqrt{N}})$ tends to zero in the thermodynamic limit.

Our conclusions are: a) the localization of the principal eigenvector and the SIS epidemic process are related, but do not exactly correspond, because the infection can persist in subgraphs which correspond to the delocalized parts of the principal eigenvector; b) a zero prevalence just above threshold in the thermodynamic limit does not imply that the number of infected nodes is finite. Even for the star graph, the average number of infected nodes is of order $O(\sqrt{N})$ just above the epidemic threshold. Thus, it might be impossible to find a network, where the near-threshold number of infected nodes is finite in the thermodynamic limit under the mean-field theory. We address those conclusions to show that: a) in the thermodynamic limit, mean-field theories are consistent and applicable to study the near-threshold behavior because the epidemic may never die out; b) the order of the prevalence as a function of the network size $N$ is essential in the near-threshold spreading dynamic, which is also the motivation of our work. In the following part, we consider a network localized if the IPR $\eta(x)=O(N^{-\alpha})$ for $0\leq\alpha<1$, and is delocalized only if $\eta(x)=O(N^{-1})$ as defined by Pastor-Satorras and Castellano \cite{pastor2016distinct,pastor2018eigenvector}.

Throughout this paper, we confine ourselves to the mean-field method. Beyond the mean-field theory, the correlation between infection states of neighbors needs to be taken into consideration. In some cases, the correlation can be substantial. For example, the covariance of the infection state between neighbors in an infinite cycle graph is shown \cite[Theorem~3]{van2016approximate} to be $\xi=0.121375$ which is apparently not negligible and may introduce long-range correlations. The effect of long-range correlations on localization is unclear and the understanding of localization beyond mean-field theories is still open.
\section{The behavior of bursts just above the epidemic threshold}
Since our focus lies on the order of the prevalence as a function of network size $N$, we construct an SIS process with a non-constant prevalence in the steady state. We consider bursts that infect all healthy neighbors, leading to an explosion of the spreading. We choose periodical infections to allow analysis, and confine the SIS process to an infectious regime just above the epidemic threshold by tuning the period of the bursts. In some heterogeneous networks, e.g. scale-free networks, the ratio between the maximum prevalence (after each burst) and the minimum prevalence (before each burst) grows to infinity with the network size $N$. Even if infected nodes maximize their infection capability to infect all neighbors and magnify the prevalence by a divergent factor, we demonstrate that the process is still localized and the spreading is restricted to a small subgraph, whose size divided by the whole network size $N$ tends to zero.

In particular, our bursty SIS model is still an SIS model and each infected node can still be cured with rate $\delta$ as a Poisson process, but the infection (infecting all healthy neighbors) only happens at the time points: $1/\beta, 2/\beta, \ldots$ with infection rate $\beta$ and effective infection rate $\tau=\beta/\delta$. This bursty SIS model is a limit case of a non-Markovian SIS model \cite{liu2018burst} and was proposed to find the largest possible non-Markovian epidemic threshold. The bursty effect may lead to counterintuitive results. For example, in the epidemic process on a very large star graph, the infection probability of the hub node is much larger than those of the leaf node, when the process is just above the epidemic threshold. If the hub is infected just before a burst, the hub can infect all the leaf nodes and thus all nodes in the network are infected, which seems to lead to a non-zero prevalence (a global epidemic). However, even for the star graph, we will show that the prevalence just above threshold still converges to zero as the network size $N\rightarrow \infty$.


The mean-field governing equations of the bursty SIS process are \cite{liu2018burst},

\begin{equation}\label{eq_burst_governing}
\begin{aligned}
v_i\left(\frac{n+1}{\beta}\right)&=\lim\limits_{t^*\rightarrow 1/\beta}\Bigg(\left[1-v_i\left(t^*+\frac{n}{\beta}\right)\right]\Bigg\{1-\\
&\prod_{j\in\mathcal{N}_i}\left[1-v_j\left(t^*+\frac{n}{\beta}\right)\right]\Bigg\}+v_i\left(t^*+\frac{n}{\beta}\right)\Bigg)
\end{aligned}
\end{equation}

and
\begin{equation}\label{eq_curing_governing}
\frac{\mathrm{d}v_i\left(\frac{n}{\beta}+t^*\right)}{\mathrm{d}t^*}=-\delta v_i\left(\frac{n}{\beta}+t^*\right)
\end{equation}
where $v_i(t)$ is the infection probability of node $i$ at time $t$, the length of the time passed after the nearest burst is $t^*\in [0,1/\beta)$, and $\mathcal{N}_i$ denotes the set of neighbors of node $i$. Equation~(\ref{eq_burst_governing}) and Eq.~(\ref{eq_curing_governing}) describe the bursty infection and curing processes, respectively. The epidemic threshold of the bursty SIS model is $\tau_c^{(B)}= 1/\ln(\lambda_1+1)$ as demonstrated in \cite{liu2018burst}. Figure~\ref{fig_prevalencevsT} shows that, if the effective infection rate $\tau$ is above the mean-field threshold $\tau_c^{(B)}$, then the prevalence periodically changes with period $1/\beta$ in the steady state; otherwise, the infection vanishes exponentially fast.
\begin{figure*}[t]
  \centering
  \subfigure[]{\label{fig_prevalencevsT}\includegraphics[width=0.4\textwidth]{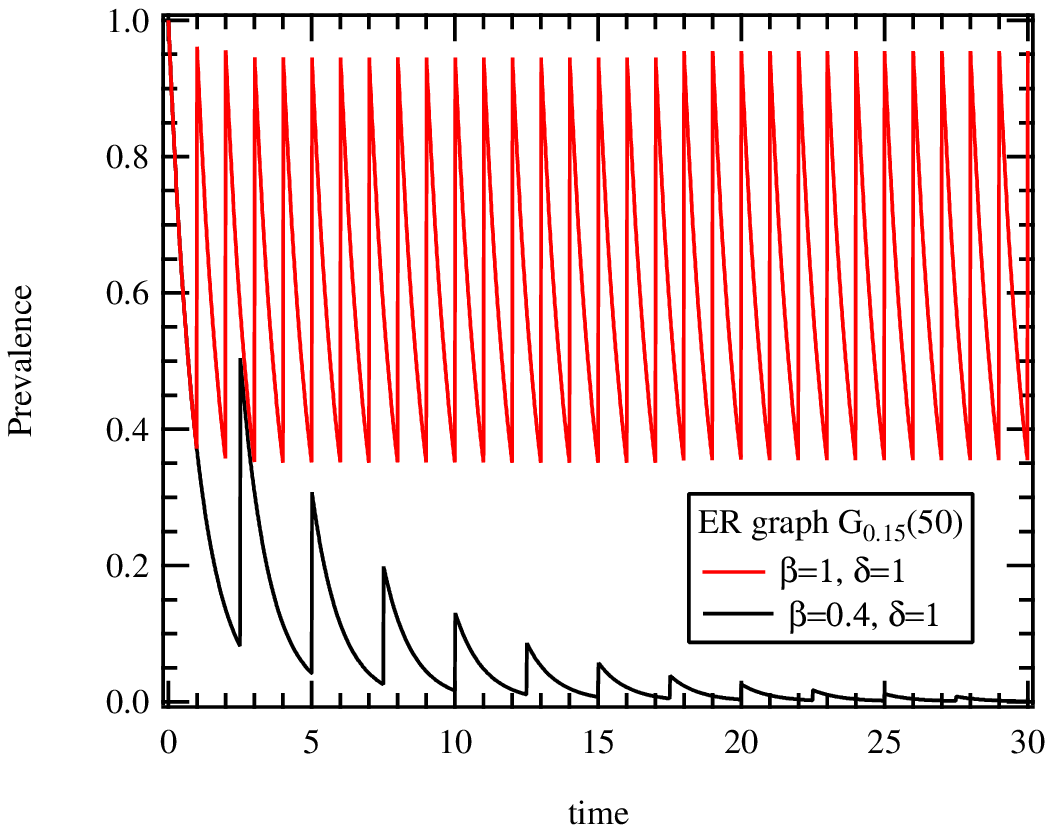}}
  \subfigure[]{\label{fig_phasetransition}\includegraphics[width=0.4\textwidth]{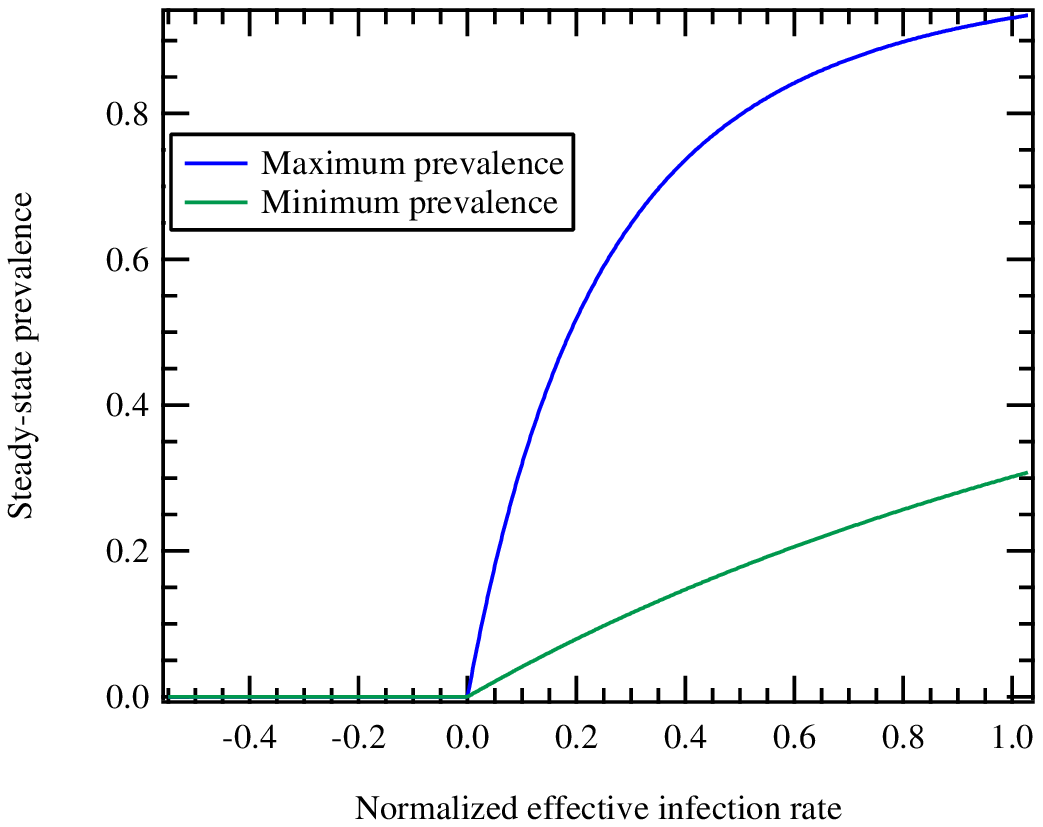}}
  \caption{(\textbf{a}): The bursty SIS prevalence on an Erd\H{o}s-R\'enyi (ER) graph $G_{0.15}(50)$. The epidemic threshold is $\tau_c^{(B)}=\frac{1}{\ln(\lambda_1+1)}=0.4437$. The red curve reflects the regime with the effective infection rate $\tau=1>\tau_c^{(B)}$, while the black curve represents the prevalence at $\tau=0.4$ below the threshold; (\textbf{b}): The phase transition of the bursty SIS model with the normalized effective infection rate $\tilde{\tau}$ on the same network. The upper blue curve and the lower green curve are the maximum and minimum steady-state prevalence, respectively. The steady-state prevalence changes periodically between the maximum and minimum.}
\end{figure*}

We denote the steady-state prevalence at time $t^*$ after each burst by $y_\infty(\tilde{\tau},t^*)\triangleq \frac{1}{N}\lim\limits_{n\rightarrow\infty}\sum_{i=1}^{N}v_i(n/\beta+t^*)$ in the bursty SIS process with the normalized effective infection rate $\tilde{\tau}=\tau/\tau_c^{(B)}-1$. The steady-state prevalence $y_\infty(\tilde{\tau},t^*)$ is maximum just after each burst at $t^*=0$, denoted by $y_\infty^+(\tilde{\tau})\triangleq y_\infty(\tilde{\tau},0)$, and is minimum before each burst at $t^*\rightarrow 1/\beta$, denoted by $y_\infty^-(\tilde{\tau})\triangleq\lim\limits_{t^*\rightarrow1/\beta}y_\infty(\tilde{\tau},t^*)$. The ratio between the maximum and minimum steady-state prevalence is shown in \cite{liu2018burst} to be $y_\infty^+(\tilde{\tau})/y_\infty^-(\tilde{\tau})\leq \lambda_1+1$ and equality is achieved when $\tilde{\tau}\downarrow 0$. Thus, for a network with a largest eigenvalue $\lambda_1=O(N^\alpha)$ with $\alpha>0$, $y_\infty^+(\tilde{\tau})/y_\infty^-(\tilde{\tau})$ diverges for small $\tilde{\tau}$ in the thermodynamic limit, which is the most unusual feature of the bursty dynamic compared to traditional studies. As shown in Fig.~\ref{fig_phasetransition}, the steady prevalence $y_\infty^+(\tilde{\tau})$ (blue curve) and $y_\infty^-(\tilde{\tau})$ (green curve) experience a phase transition at the threshold $\tilde{\tau}=0$. Although the two curves approach each other from above to $\tau_c^{(B)}$, their ratio $y_\infty^+(\tilde{\tau})/y_\infty^-(\tilde{\tau})$ can diverge if $\lambda_1\rightarrow\infty$ in the thermodynamic limit.

The maximum and the minimum steady-state prevalence $y_\infty^+(\tilde{\tau})=a_{\text{max}}\tilde{\tau}+o\left(\tilde{\tau}\right)$ and $y_\infty^-(\tilde{\tau})=a_{\text{min}}\tilde{\tau}+o\left(\tilde{\tau}\right)$ just above threshold possess coefficients (see Theorem~1 in Appendix~A)
\begin{equation}\label{eq_max_critical_prevalence}
  a_{\text{max}}=\frac{2(\lambda_1+1)\ln(\lambda_1+1)}{\lambda_1}a
\end{equation}
and
$a_{\text{min}}=a_{\text{max}}/(\lambda_1+1)$, respectively. The coefficient $a$ of the traditional SIS prevalence in (\ref{eq_SIS_critical_coefficient}) is only determined by the first- and third-order moments of the principal eigenvector $x$ and the network size $N$, but the coefficients $a_{\text{max}}$ and $a_{\text{min}}$ are also related to the largest eigenvalue $\lambda_1$.

As mentioned, the bursts increase the prevalence by a factor of $\lambda_1$. For delocalized network with convergent maximum degree, we expect that the largest eigenvalue $\lambda_1=O(1)$ because $\lambda_1\leq \max\limits_{\forall \text{link}(i,j)}\sqrt{d_id_j}$ as shown in \cite[p.~48]{van2010graph}. Thus, the maximum and minimum prevalence are of the same order $O(1)$. There is always a non-zero average fraction of infected nodes just above the mean-field epidemic threshold in the thermodynamic limit.

Now we consider the localized networks. If the variance $\text{Var}[D]\rightarrow\infty$ as $N\rightarrow\infty$, then the largest eigenvalue $\lambda_1\geq\sqrt{\mathrm{Var}[D]+E^2[D]}$ diverges as shown in \cite[p.~47]{van2010graph}. Furthermore, a divergent maximum degree ensures the largest eigenvalue $\lambda_1\rightarrow\infty$ as $N\rightarrow\infty$, since $\lambda_1$ of the whole network is larger than that of the star subgraph with a divergent hub \cite[Eq.~(3.23)]{van2010graph}. In particular, the largest eigenvalue of a power-law network diverges in the thermodynamic limit \cite{chung2003eigenvalues}. The bursts magnify the traditional SIS coefficient $a$ in (\ref{eq_SIS_critical_coefficient}) by a divergent factor $\ln(\lambda_1+1)$ as shown by Eq.~(\ref{eq_max_critical_prevalence}), i.e. $a_{\text{max}}=2\ln(\lambda_1)a$. For the eigenvector localization as discussed in \cite{pastor2016distinct}, where the eigenvector $x$ is defined to be localized in a finite or infinite subnetwork, the coefficient $a$ in (\ref{eq_SIS_critical_coefficient}) follows an decay as $O(N^{-\epsilon})$ for $\epsilon>0$ and the maximum coefficient $a_{\text{max}}$ in (\ref{eq_max_critical_prevalence}) will also converge to zero as $a_{\text{max}}=O(N^{-\epsilon}\ln N)$ since $\ln\lambda_1<\ln N$. Although the bursts allow the infected nodes to infect all their healthy neighbors to reach as many nodes as possible in the network, the bursts cannot transform a zero prevalence to a non-zero prevalence in the thermodynamic limit.

\section{Numerical and simulation results}
In this section, we evaluate our conclusion in synthetic and real networks.

\subsection{Numerical results under the mean-field theory}
The first case is the delocalized networks. In regular graphs with average degree $d$, the largest eigenvalue $\lambda_1=d$ and the coefficients $a_{\text{max}}$ and $a_{\text{min}}$ are constant, only depending on degree $d$ as explained in Appendix~D. Figure~\ref{fig_ERnetcoeff} shows the results of the Erd\"{o}s-R\'enyi (ER) graphs with average degree $d_{\text{av}}=8$, and both the maximum and minimum coefficients $a_{\text{max}}$ and $a_{\text{min}}$ are in the order of $O(1)$ and independent of the network size $N$.

For localized networks with divergent largest eigenvalue $\lambda_1$, the ratio between the maximum and minimum prevalence $\lim\limits_{\tilde{\tau}\downarrow 0}y^+(\tilde{\tau})/y^-(\tilde{\tau})\rightarrow\infty$ in the thermodynamic limit. We first consider star graphs as already mentioned. We may verify (see Appendix~C) that the coefficients of star graphs follow $a_{\text{max}}=O(N^{-0.5}\ln N)$ and $a_{\text{min}}= O(N^{-1}\ln N)$. Although the average number of infected nodes both before and after each burst diverge, the maximum and minimum prevalence converges to zero as $N\rightarrow\infty$. We also generate connected scale-free networks with different power-law exponents $\gamma$ and average degree $d_{av}=8$ using the method introduced by Goh \emph{et al.} \cite{goh2001universal}. When generating the scale-free networks, we only preserve the largest connected component, because the original method of Goh \emph{et al.} does not guarantee a connected network. Figure~\ref{fig_powerlawnetcoeff} shows that the coefficient $a_{\text{max}}$ of power-law networks with different exponent $\gamma$ decays with the network size $N$. Furthermore, we consider networks with exponential degree distribution and use the network generating method in \cite{dorogovtsev2013evolution}. Initially, there are only $m$ nodes in the network, and each step a new node arrives. The new node is randomly connected to $m$ nodes of the current network (without preferential attachment as in the Barab\'asi-Albert model \cite{barabasi1999emergence}). The case $m=1$ introduced in \cite{dorogovtsev2013evolution} generates a uniform recursive tree \cite[16.2.2]{van_mieghem_performance_2014}. Following a same derivation as in \cite{dorogovtsev2013evolution}, the degree distribution of the network is $\Pr[D=k]=\frac{1}{1+m}(1+1/m)^{-k+m}$ for a network with average degree $d_{av}=2m$ in the thermodynamic limit. Figure~\ref{fig_EXPnetcoeff} shows the maximum coefficient $a_{\text{max}}$ of exponential networks with $m=1,2,4$, which decays with network size $N$.

For the synthetic networks, we can evaluate their near-threshold behavior by generating those networks with different size and check their order with the network size $N$. However, the size of a real network is fixed and the value of the coefficients $a_{\text{max}}$ and $a_{\text{min}}$ provide no information about the order of magnitude as a function of the network size $N$. To obtain insights from the value of $a_{\text{max}}$ in real networks, we generate random synthetic networks with a similar size, average degree, and degree distribution for each real network and compare the coefficients $a_{\text{max}}$ of the synthetic networks with those of the real networks. For most real networks, the degree distributions approximately follow a power law \footnote{Although there are debates that power-law networks are rare \cite{broido2018scale,stumpf2012critical}, the degree distribution of most real networks is linear in a log-log plot for several orders of magnitude, and then we can use synthetic power-law random graphs to approximate those real networks.} or exponential distribution. Thus, we can compare those real networks with the synthetic power-law and exponential networks mentioned above. Figure~\ref{fig_realnets} shows the value of the coefficient $a_{\text{max}}$ of real networks and corresponding synthetic networks, which are described in detail in the supplementary information. The value of the coefficients $a_{\text{max}}$ are similar in synthetic and real networks, especially for large networks. Thus, we conjecture that the near-threshold behavior of bursts is similar in real and synthetic networks.

\subsection{Simulations}
We emphasize that the exact coefficient $a_{\text{max}}$ is hard to obtain by simulations due to several reasons: a) The SIS process on finite-size networks has no sharp phase transition; b) Around the mean-field epidemic threshold, most realizations of the simulation die out (entering the absorbing all-healthy state) in a relatively short time. The time when the process is in the metastable state is hard to determine; c) The prevalence $y_{\infty}^+(\tilde{\tau})$ and the normalized effective infection rate $\tilde{\tau}=\tau/\tau_c^{(B)}-1$ are small just above the mean-field threshold, and the numerical error of the exact coefficient $y_{\infty}^+(\tilde{\tau})/\tilde{\tau}$ can be large (since $\tilde{\tau}\approx 0$). Thus, only an approximation of the coefficient $a_{\text{max}}$ can be obtained by simulations.

In our simulations of the bursty SIS process, all nodes are infected at time $t=0$ to prevent early die-out \cite{dieout}. If a node is infected at time $t$, then the infected node will be cured at time $t+T$ where $T$ is an exponential random variable with mean $1/\delta$ and all its neighbors will be infected at time $t+1/\beta$ if $T>1/\beta$. Each realization of the bursty SIS process runs for $50$ time units (simulations stop at $t=50$) which are long enough under our setting and $10^5$ realizations are simulated for each network. During the simulation of the bursty SIS process, the number of infected nodes is recorded every $0.01$ time unit for each realization and the prevalence is calculated by averaging all realizations. The coefficient $a_{\text{max}}$ is calculated by dividing the last local maximum of the recorded prevalence by $\tilde{\tau}$.

The simulation result on ER random graphs is shown in Fig.~\ref{fig_ERnetcoeff} for $\tilde{\tau}=0.0001$ and curing rate $\delta=4$. The results on power-law networks is shown in Fig.~\ref{fig_powerlawnetcoeff} for $\tilde{\tau}=0.1$ and $\delta=2$. We also perform the simulations on exponential networks as shown in Fig.~\ref{fig_EXPnetcoeff}, for $\tilde{\tau}=0.1$ with $\delta=1$ for $m=1,2$ and $\delta=2$ for $m=4$. The different settings of parameters $\tilde{\tau}$ and $\delta$ are based on the relaxation time of the process, i.e. the time that the prevalence curve approaches zero visually. In the cases of power-law and the exponential graphs, most of the realizations die out and the prevalence is calculated by averaging the realizations which do not die out at $t=45$. In the power-law and the exponential graphs, the simulation results are amazingly consistent with the mean-field theoretical results even though correlations of the infection state between neighbors are omitted in the mean-field analysis. In the ER graphs, the mean-field approximation does not perform well because the correlations play a role in sparse networks with homogeneous degree distribution \cite{liu2018autocorrelation}. However, the variation of the simulated coefficient $a_{\text{max}}$ with the network size $N$ agrees with the mean-field results: Fig.~\ref{fig_ERnetcoeff} indicates delocalization while Fig.~\ref{fig_powerlawnetcoeff} and Fig.~\ref{fig_EXPnetcoeff} indicate localization of the bursty SIS process.

\begin{figure*}[t]
  \centering
  \subfigure[]{\label{fig_ERnetcoeff}\includegraphics[width=0.4\textwidth]{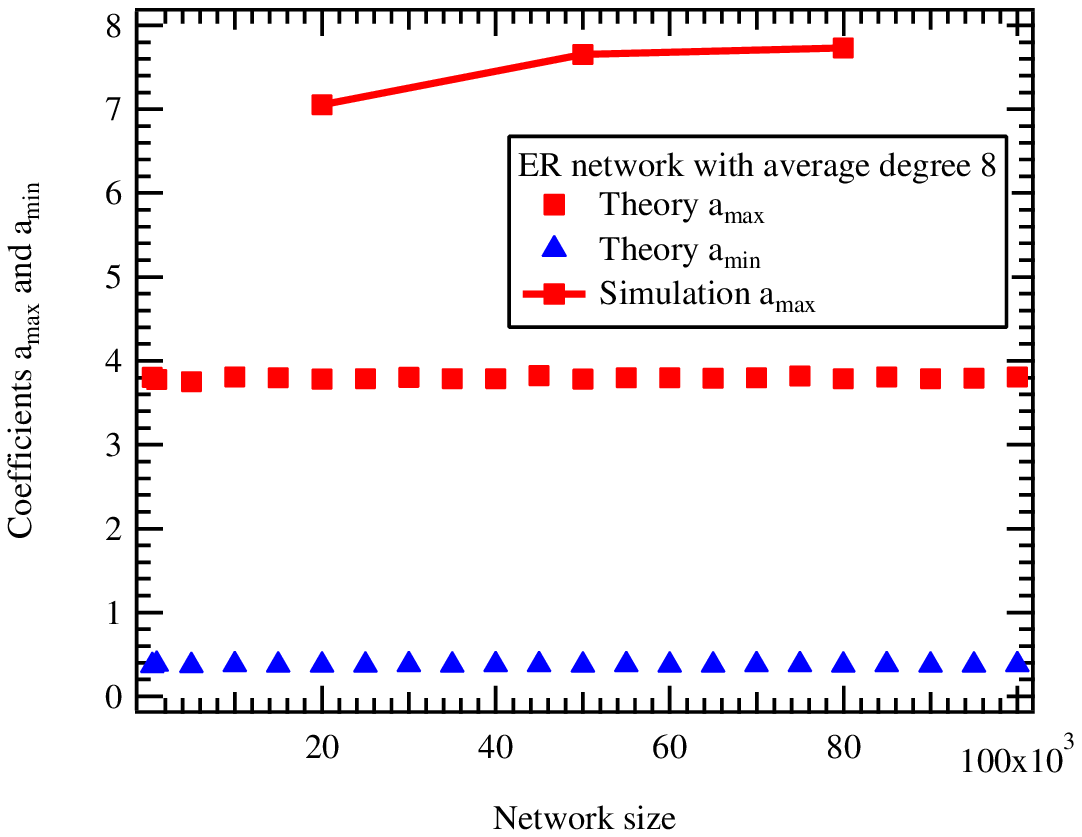}}
  \subfigure[]{\label{fig_powerlawnetcoeff}\includegraphics[width=0.4\textwidth]{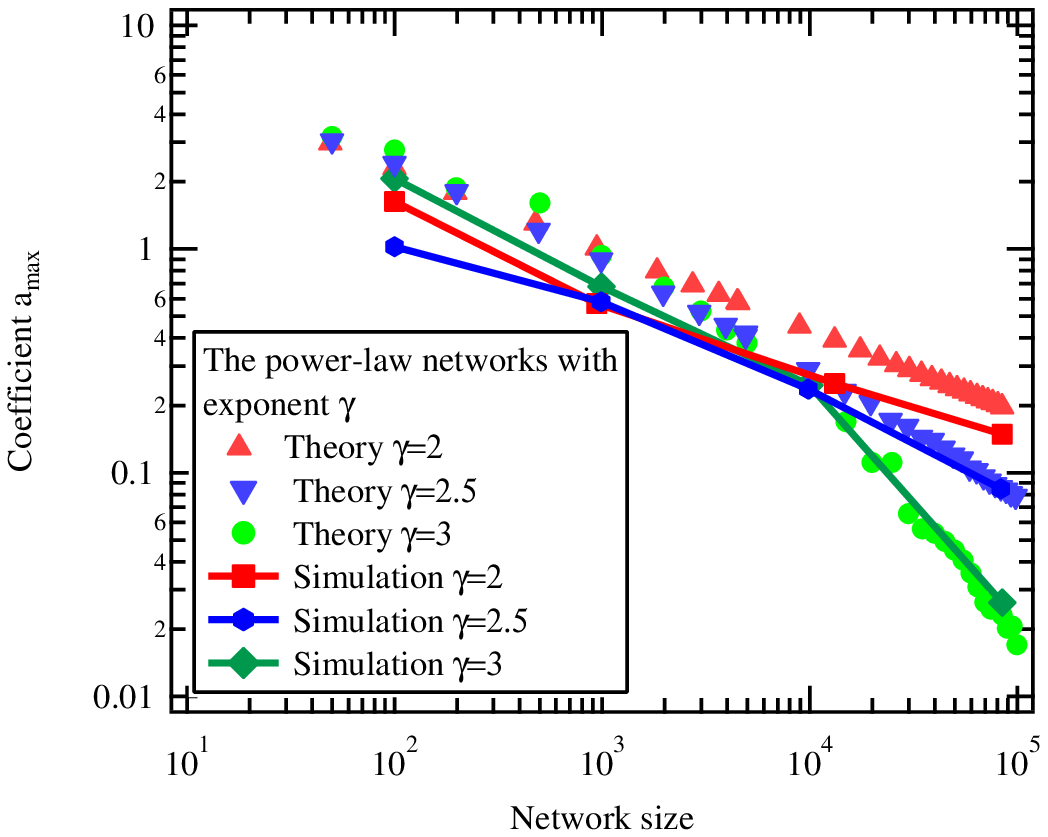}}\\
  \subfigure[]{\label{fig_EXPnetcoeff}\includegraphics[width=0.4\textwidth]{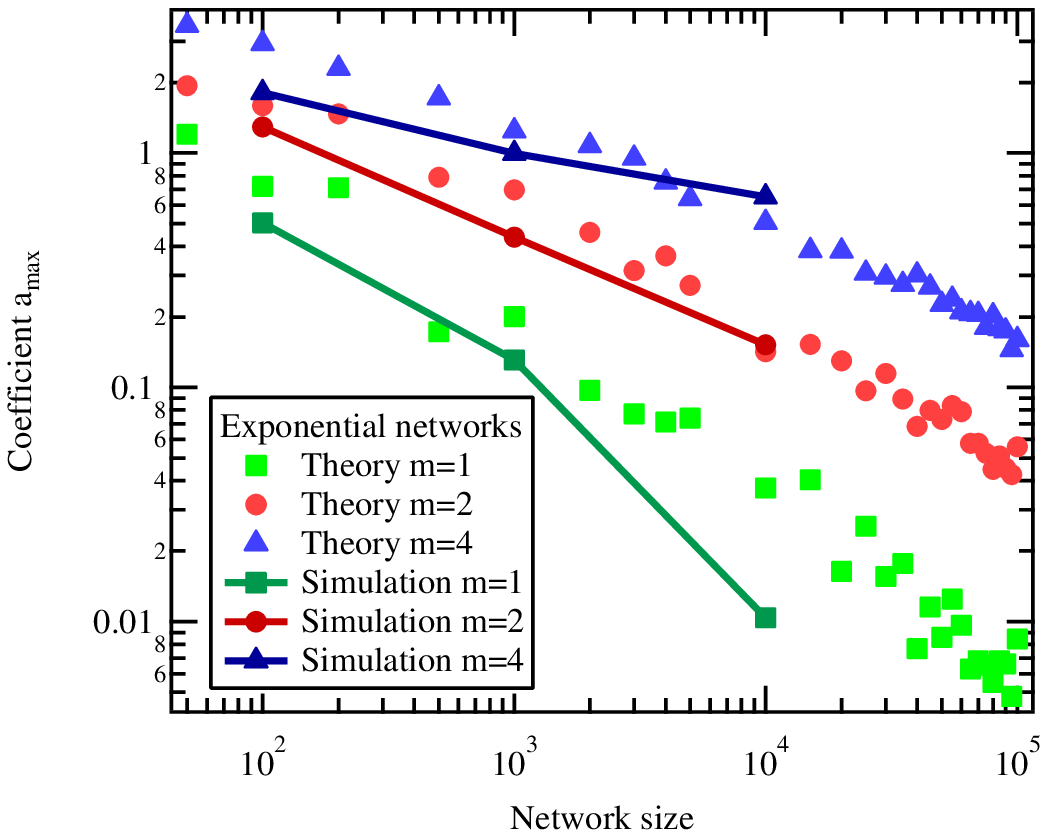}}
  \subfigure[]{\label{fig_realnets}\includegraphics[width=0.4\textwidth]{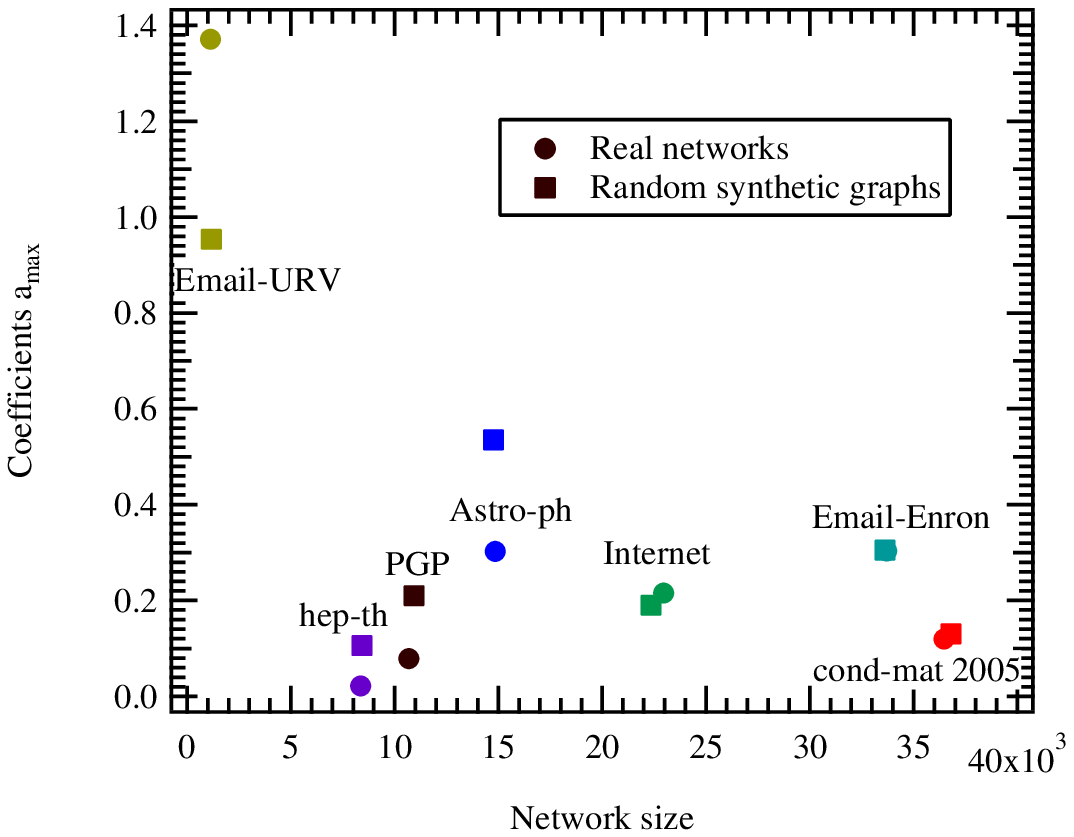}}
  \caption{(\textbf{a}): The coefficient $a_{\text{max}}$ and $a_{\text{min}}$ of ER networks; (\textbf{b}): The coefficients $a_{\text{max}}$ of networks with power-law degree distribution converge to zero with network size $N$; (\textbf{c}): The coefficient $a_{\text{max}}$ of networks with exponential degree distributions; (\textbf{d}): The coefficient $a_{\text{max}}$ of some well studied real networks: Email-URV \cite{guimera2003self}, hep-th \cite{newman2001structure}, PGP \cite{boguna2004models}, astro-ph \cite{newman2001structure}, Internet \cite{marknewmannet}, Email-Enron \cite{snapnets}, and cond-mat 2005 \cite{newman2001structure}.}
\end{figure*}
\section{Conclusion}
In this paper, we study the localization of the SIS process on networks. We specifically study a bursty SIS model which possesses a non-constant steady-state prevalence. In the bursty SIS model, the infected nodes can infect all healthy neighbors periodically to reach as many nodes as possible, and the prevalence is magnified by a divergent factor equal to the largest eigenvalue $\lambda_1$ in the thermodynamic limit. We show that the spreading process is still localized even if the bursty mechanism is applied, and our result introduces an open problem: are there any spreading dynamics leading to a delocalized spreading on networks with localized principal eigenvectors? If there exists such a case, then our analysis shows that the infection dynamic with a Poisson curing process must magnify the near-threshold prevalence $y_\infty(\tilde{\tau})$ of the traditional SIS model by a factor in the order of $O(N^z)$ for some value of $z\in(0,1)$.
\section*{acknowledgements}
 We thank Sergey Dorogovtsev and Karel Devriendt for their helpful discussion. Q.L. is thankful for the support from China Scholarship Council.
\ifCLASSOPTIONcaptionsoff
  \newpage
\fi
\bibliographystyle{IEEEtran}
\bibliography{bibitem}
\end{document}


%
\title{Appendix for "Network localization is unalterable by infections in bursts"}
%
%
%
%

\author{Qiang~Liu and
        Piet~Van~Mieghem
\IEEEcompsocitemizethanks{\IEEEcompsocthanksitem Q. Liu and P. Van Mieghem are with the Faculty of Electrical Engineering, Mathematics and Computer Science, Delft University of Technology, Delft, the Netherlands.\protect\\
E-mail: \{Q.L.Liu, P.F.A.VanMieghem\}@TUDelft.nl}
}

%
%

%




\maketitle

\IEEEdisplaynontitleabstractindextext

%
\IEEEpeerreviewmaketitle


%
%
%
%

\appendices
\section{The coefficient $a_{\text{max}}$}\label{append:coffa}
If the adjacency matrix of the network is $A$, the largest eigenvalue of $A$ is $\lambda_1$, the normalized principal eigenvalue of $A$ is $x=[x_1,\ldots,x_N]^T$, and the effective infection rate is $\tau=\beta/\delta$ with infection rate $\beta$ and curing rate $\delta$, then the epidemic threshold \cite[Theorem~1]{liu2018burst} of the bursty SIS model is $\tau_c^{(B)}=\frac{1}{\ln(\lambda_1+1)}$ and the following Theorem holds.
\begin{theorem}\label{theorem}
  For the bursty SIS process with effective infection rate $\tau$ above the threshold $\tilde{\tau}\triangleq\frac{\tau}{\tau_c^{(B)}}-1>0$, the maximum steady-state prevalence is $y^+_{\infty}(\tilde{\tau})=a_{\text{max}}\tilde{\tau}+o(\tilde{\tau})$ with
  $$a_{\text{max}}=\frac{2}{N}\frac{(\lambda_1+1)\ln(\lambda_1+1)\sum_{i=1}^{N}x_i}{\lambda_1\sum_{i=1}^{N}x_i^3}$$
  and the minimum prevalence is $y_{\infty}^-(\tilde{\tau})=a_{\text{min}}\tilde{\tau}+o(\tilde{\tau})$ with $a_{\text{min}}=a_{\text{max}}/(\lambda_1+1)$.
\end{theorem}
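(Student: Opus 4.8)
The plan is to carry out a transcritical bifurcation analysis of the mean-field steady-state equations of the bursty SIS model about the epidemic threshold $\tau_c^{(B)}=1/\ln(\lambda_1+1)$, mirroring the way one linearizes the continuous NIMFA equations about $1/\lambda_1$. Writing $\tau=\tau_c^{(B)}(1+\tilde\tau)$, I denote by $v_i^+$ and $v_i^-$ the steady-state probabilities that node $i$ is infected immediately after a burst and immediately before the next one, so that $y_\infty^{\pm}=\frac{1}{N}\sum_i v_i^{\pm}$ are precisely the maximum and minimum prevalences. The two governing relations, taken from \cite{liu2018burst}, are the nonlinear burst map carrying $v^-$ to $v^+$ through the (closed) neighborhood infection, and the purely multiplicative inter-burst decay $v_i^-=e^{-\delta T}v_i^+$, in which no new infections occur and whose survival factor equals $1/(\lambda_1+1)$ at criticality (equivalently $\delta T=\ln(\lambda_1+1)$). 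Eliminating $v^+$ collapses the system to a single fixed-point equation $F(v^-;\tilde\tau)=0$ whose disease-free solution is $v^-=0$.

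Because the emergence is transcritical --- which the linear law $y_\infty^\pm\sim a\tilde\tau$ already anticipates --- I would expand with a single amplitude $\epsilon$, setting $v_i^-=\epsilon\,\phi_i+\epsilon^2\psi_i+o(\epsilon^2)$ and $\tilde\tau=\epsilon\,c_1+o(\epsilon)$. At order $\epsilon$ the one-period linearized operator $L$ has, by construction, principal eigenvalue $1$ at $\tau=\tau_c^{(B)}$, so $\phi$ is forced to be the principal eigenvector $x$ and the threshold $\tau_c^{(B)}=1/\ln(\lambda_1+1)$ is recovered; the constant $c_1$ relating detuning to amplitude remains free at this order.

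The crux is the Fredholm solvability condition at order $\epsilon^2$. The order-$\epsilon^2$ equation has the form $(I-L)\psi=b$, where $L$ is symmetric (since $A$ is) with left and right null vector $x$, and $b$ is the sum of a detuning contribution proportional to $c_1 x_i$ (from differentiating $L$ in $\tilde\tau$) and a quadratic saturation contribution proportional to $x_i^2$ (from expanding both the neighborhood product $\prod_j(1-p\,a_{ij}v_j^-)$ and the $(1-v_i^-)$ prefactor of the burst map). Imposing $x^{T}b=0$ projects the quadratic term onto $\sum_i x_i^3$ and the detuning term onto $\sum_i x_i^2$, which solves for $c_1$; inverting $\epsilon=\tilde\tau/c_1$ and forming $y_\infty^+=\frac{\epsilon}{N}\sum_i x_i$ then reproduces the shape $\frac{\sum_i x_i}{N\sum_i x_i^3}$ of $a_{\max}$.

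The prefactor $\frac{2(\lambda_1+1)\ln(\lambda_1+1)}{\lambda_1}$ is then assembled from three identifiable pieces: the $\ln(\lambda_1+1)$ from differentiating $\tau_c^{(B)}$, the $(\lambda_1+1)$ from the post-burst amplification of the principal mode (equivalently the decay factor), and the factor $2$ tracking the combined second-order contributions of the burst nonlinearity. The minimum prevalence follows at once from the decay relation, $y_\infty^-=e^{-\delta T}y_\infty^+=y_\infty^+/(\lambda_1+1)$ to leading order, giving $a_{\min}=a_{\max}/(\lambda_1+1)$. I expect the main obstacle to be exactly this constant-chasing: correctly propagating the second-order terms of the nonlinear burst map through the decay relation and the projection onto $x$, since a single miscounted factor there alters $a_{\max}$ while leaving the robust $\sum_i x_i/\sum_i x_i^3$ structure untouched --- the shape is easy, the constant is delicate.
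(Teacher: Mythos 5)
Your strategy is the paper's strategy: reduce the bursty dynamics to a discrete-time fixed-point equation (the paper eliminates the decay phase and tracks the post-burst probabilities $p_i[n]=v_i(n/\beta)$, you eliminate $v^+$ and track $v^-$ --- equivalent), expand the steady state as an amplitude times the principal eigenvector plus an orthogonal correction, and extract the amplitude--detuning relation by projecting onto $x$. The paper's computation of $a_1$, its chain rule through $\epsilon=\tau-\tau_c^{(B)}$, and the final normalization by $\tau_c^{(B)}$ implement exactly your ``Fredholm solvability plus constant assembly,'' and in both arguments the minimum prevalence follows from the decay factor $\mathrm{e}^{-1/\tau}\rightarrow 1/(\lambda_1+1)$.

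The gap sits at second order, precisely where you predicted trouble. The quadratic part of your $b$ is \emph{not} componentwise proportional to $x_i^2$: expanding the neighborhood product $\prod_{j\in\mathcal{N}_i}(1-\tilde{\beta}v_j)$ to second order produces the pairwise-neighbor sum $\tilde{\beta}^2\sum_{\{j,k\in\mathcal{N}_i|j<k\}}x_jx_k$, which for a general graph is not a multiple of $x_i^2$; only the contribution of the prefactor $(1-(1-\tilde{\delta})v_i)$, namely $(1-\tilde{\delta})\tilde{\beta}\lambda_1x_i^2$, has that form. Consequently the solvability condition $x^Tb=0$ does not immediately ``project the quadratic term onto $\sum_ix_i^3$''; you need the identity (the paper isolates it as a separate Lemma, proved by applying $\sum_{j\in\mathcal{N}_i}x_j=\lambda_1x_i$ twice)
$$\sum_{i=1}^{N}x_i\sum_{\{j,k\in\mathcal{N}_i|j<k\}}x_jx_k+\lambda_1\sum_{i=1}^{N}x_i^3=\frac{1}{2}\lambda_1(\lambda_1+1)\sum_{i=1}^{N}x_i^3$$
evaluated at the critical values $\tilde{\beta}=1-\tilde{\delta}=1/(\lambda_1+1)$, which yields $a_1=2/(\lambda_1\sum_{i}x_i^3)$. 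This identity is where the factor $2$ and the $1/\lambda_1$ in $a_{\text{max}}$ actually come from; if you instead treat the pairwise term as proportional to $x_i^2$ (e.g., as $\tfrac{1}{2}\lambda_1^2x_i^2$), the projection gives $\tfrac{1}{2}\lambda_1(\lambda_1+2)/(\lambda_1+1)\sum_ix_i^3$ rather than $\tfrac{1}{2}\lambda_1\sum_ix_i^3$, and the constant --- which is the entire content of the theorem --- comes out wrong. Your ``three identifiable pieces'' bookkeeping cannot be completed into a derivation without this lemma. The rest of your outline (the linear-order argument forcing $\phi=x$, the chain rule $\mathrm{d}(\lambda_1+1-\mathrm{e}^{1/\tau})/\mathrm{d}\tau$ at $\tau_c^{(B)}$ giving $(\lambda_1+1)\ln^2(\lambda_1+1)$ followed by the normalization that removes one power of $\ln(\lambda_1+1)$, and $a_{\text{min}}=a_{\text{max}}/(\lambda_1+1)$ from the decay) matches the paper and goes through.
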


To prove Theorem~\ref{theorem}, we first prove the following Lemma.
\begin{lemma}\label{lemma}
  $$\sum_{i=1}^{N}x_i\sum_{\{j,k\in\mathcal{N}_i|j<k\}}x_jx_k+\lambda_1\sum_{i=1}^{N}x_i^3=\frac{1}{2}\lambda_1(\lambda_1+1)\sum_{i=1}^{N}x_i^3$$
  where $\mathcal{N}_i$ denotes the set of neighbors of node $i$.
\end{lemma}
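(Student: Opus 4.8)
The plan is to reduce everything to the eigenvector relation $\sum_{j\in\mathcal{N}_i}x_j=\lambda_1 x_i$, which is simply the $i$-th component of $Ax=\lambda_1 x$. First I would rewrite the inner sum over unordered neighbor pairs as a difference of squares. Since
$$\Bigl(\sum_{j\in\mathcal{N}_i}x_j\Bigr)^2=\sum_{j\in\mathcal{N}_i}x_j^2+2\!\!\sum_{\{j,k\in\mathcal{N}_i|j<k\}}\!\!x_jx_k,$$
the eigenvector relation yields
$$\sum_{\{j,k\in\mathcal{N}_i|j<k\}}x_jx_k=\tfrac12\Bigl(\lambda_1^2x_i^2-\sum_{j\in\mathcal{N}_i}x_j^2\Bigr).$$
Multiplying by $x_i$ and summing over $i$ splits the first term of the claimed identity into $\tfrac12\lambda_1^2\sum_i x_i^3$ minus a cross term $\tfrac12\sum_i x_i\sum_{j\in\mathcal{N}_i}x_j^2$.

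The key step, and the one place where an idea rather than bookkeeping is needed, is evaluating this cross term. I would write it as the double sum $\sum_{i,j}A_{ij}x_ix_j^2$ over all ordered index pairs and then reverse the order of summation so that the eigenvector relation acts on the column index:
$$\sum_{i,j}A_{ij}x_ix_j^2=\sum_j x_j^2\sum_i A_{ij}x_i=\sum_j x_j^2(\lambda_1 x_j)=\lambda_1\sum_j x_j^3,$$
where the middle equality uses the symmetry of $A$ together with $Ax=\lambda_1 x$. The point is that the factor $x_j^2$ multiplying each edge can be pulled out of the $i$-summation, collapsing the whole term to $\lambda_1\sum_j x_j^3$.

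Finally I would substitute back. The first term of the left-hand side becomes $\tfrac12\lambda_1^2\sum_i x_i^3-\tfrac12\lambda_1\sum_i x_i^3=\tfrac12\lambda_1(\lambda_1-1)\sum_i x_i^3$, and adding the remaining $\lambda_1\sum_i x_i^3$ gives $\tfrac12\lambda_1(\lambda_1+1)\sum_i x_i^3$, exactly the right-hand side. I do not expect a genuine obstacle here: the only things to watch are the factors of $\tfrac12$ introduced by the unordered-pair indexing, and invoking the symmetry of $A$ correctly when transferring the eigenvector relation from rows to columns in the cross-term calculation.
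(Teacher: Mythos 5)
Your proposal is correct and follows essentially the same route as the paper's proof: you expand the unordered-pair sum as $\tfrac12\bigl[(\sum_{j\in\mathcal{N}_i}x_j)^2-\sum_{j\in\mathcal{N}_i}x_j^2\bigr]$, apply the eigenvector relation to the squared term, and evaluate the cross term $\sum_{i,j}A_{ij}x_ix_j^2=\lambda_1\sum_j x_j^3$ by exploiting the symmetry of $A$ (the paper does this by pairing the two orientations of each link, you by swapping the order of summation, which is the same argument). All factors of $\tfrac12$ and the final assembly check out.
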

\begin{proof}[Proof of Lemma~\ref{lemma}]
    For the first term on the left-hand side, we have
    \begin{eqnarray}
      \sum_{i=1}^Nx_i\sum_{\{j,k\in\mathcal{N}_i|j<k\}}x_jx_k&=&\frac{1}{2}\sum_{i=1}^{N}x_i\sum_{j\in\mathcal{N}_i}x_j\left(\sum_{k\in\mathcal{N}_i}x_k-x_j\right)\nonumber\\
       &=&\frac{1}{2}\sum_{i=1}^{N}x_i\sum_{j\in\mathcal{N}_i}x_j\sum_{k\in\mathcal{N}_i}x_k\nonumber\\
       &&-\frac{1}{2}\sum_{i=1}^{N}x_i\sum_{j\in\mathcal{N}_i}x_j^2\label{eq_temp22}
    \end{eqnarray}
    Since $\sum_{j\in\mathcal{N}_i}x_j=\lambda_1x_i$, the first term of (\ref{eq_temp22}) is $\frac{1}{2}\lambda_1^2\sum_{i=1}^{N}x_i^3$. We consider the second term of (\ref{eq_temp22})
    \begin{eqnarray}
      -\frac{1}{2}\sum_{i=1}^{N}x_i\sum_{j\in\mathcal{N}_i}x_j^2&=&-\frac{1}{2}\sum_{\forall \text{link} (i,j)}\left(x_i^2x_j+x_ix_j^2\right) \nonumber\\
      &=&-\frac{1}{2}\sum_{i=1}^{N}x_i^2\sum_{j\in\mathcal{N}_i}x_j\nonumber\\
      &=&-\frac{1}{2}\lambda_1\sum_{i=1}^{N}x_i^3\nonumber
      \end{eqnarray}
    Thus, the left-hand side equals $\frac{1}{2}\lambda_1(\lambda_1+1)\sum_{i=1}^{N}x_i^3$.
\end{proof}

\begin{proof}[Proof of Theorem~\ref{theorem}]
    The mean-field governing equations of the bursty SIS process are \cite{liu2018burst},
    \begin{equation}\label{eq_burst_governing}
    \begin{aligned}
    v_i\left(\frac{n+1}{\beta}\right)=&\lim\limits_{t^*\rightarrow 1/\beta}\Bigg(\left[1-v_i\left(t^*+\frac{n}{\beta}\right)\right]\Bigg\{1-\\
    &\prod_{j\in\mathcal{N}_i}\left[1-v_j\left(t^*+\frac{n}{\beta}\right)\right]\Bigg\}+v_i\left(t^*+\frac{n}{\beta}\right)\Bigg)
    \end{aligned}
    \end{equation}
    and
    \begin{equation}\label{eq_curing_governing}
    \frac{\mathrm{d}v_i\left(\frac{n}{\beta}+t^*\right)}{\mathrm{d}t^*}=-\delta v_i\left(\frac{n}{\beta}+t^*\right)
    \end{equation}
    where $v_i(t)$ is the infection probability of node $i$ at time $t$, $t^*\in [0,1/\beta)$ is the length of the time passed after the nearest burst, and $\mathcal{N}_i$ denotes the set of neighbor nodes of node $i$. The solution of Eq.~(\ref{eq_curing_governing}) is
    \begin{equation}\label{eq_solution_curing_process}
      v_i\left(\frac{n}{\beta}+t^*\right)=v_i\left(\frac{n}{\beta}\right)\mathrm{e}^{-\delta t^*}
    \end{equation}
    Substituting (\ref{eq_solution_curing_process}) at $t^*\rightarrow 1/\beta$, i.e. $\lim\limits_{t^*\rightarrow 1/\beta}v_i(n/\beta+t^*)=v_i(n/\beta)\exp(-1/\tau)$, into Eq.~(\ref{eq_burst_governing}), we obtain the following recursion of the infection probability of each node at $t^*=0$ just after each burst,
    \begin{equation}\label{eq_sync_SIS_max_infection_prob}
    \begin{aligned}
      v_i\left(\frac{n+1}{\beta}\right)=&\left(1-v_i\left(\frac{n}{\beta}\right)\mathrm{e}^{-1/\tau}\right)\Bigg(1-\\
      &\prod_{j\in\mathcal{N}_i}\left(1-v_i\left(\frac{n}{\beta}\right)\mathrm{e}^{-1/\tau}\right)\Bigg)+v_j\left(\frac{n}{\beta}\right)\mathrm{e}^{-1/\tau}
      \end{aligned}
    \end{equation}
    Equation~(\ref{eq_discrete_time_SIS_steady_state}) is the discrete-time SIS equation with infection probability $\tilde{\beta}=\mathrm{e}^{-1/\tau}$ and curing probability $\tilde{\delta}=1-\mathrm{e}^{-1/\tau}$. We rewrite Eq.~(\ref{eq_sync_SIS_max_infection_prob}) as,
    \begin{equation*}
    \begin{aligned}
      p_i[n+1]=&\left(1-(1-\tilde{\delta})p_i[n]\right)\left(1-\prod_{j\in\mathcal{N}_i}\left(1-\tilde{\beta}p_j[n]\right)\right)\\
      &+p_j[n](1-\tilde{\delta})
      \end{aligned}
    \end{equation*}
    where $p_i[n]\triangleq v_i(n/\beta)$. In the steady state, $\lim\limits_{n\rightarrow\infty}p_i[n+1]=\lim\limits_{n\rightarrow\infty}p_i[n]=p_{i\infty}$ for $1\leq i\leq N$, and we have,
    \begin{equation}\label{eq_discrete_time_SIS_steady_state}
      \tilde{\delta} p_{i\infty}=\left(1-(1-\tilde{\delta})p_{i\infty}\right)\left(1-\prod_{j\in\mathcal{N}_i}\left(1-\tilde{\beta} p_{j\infty}\right)\right)
    \end{equation}

    In the steady state, the discrete-time SIS infection probability vector $p_\infty\triangleq[p_{1\infty},\ldots,p_{N\infty}]$ approaches an eigenvector of the adjacency matrix $A$ corresponding to the largest eigenvalue $\lambda_1$ when $\tilde{\beta}/\tilde{\delta}\downarrow 1/\lambda_1$. Thus, we can assume $p_\infty=ax+o(a)q$, where $q$ is a vector orthogonal to $x$ and with finite components.
	
	Substituting $p_{\infty}=ax+o(a)q$ into (\ref{eq_discrete_time_SIS_steady_state}), we obtain,
    \begin{equation}\label{eq_temp5}
    \begin{aligned}
        \tilde{\delta} a x_i+\tilde{\delta} o(a)q_i=&\tilde{\beta} a\sum_{j\in\mathcal{N}_i}x_j+\tilde{\beta} o(a)\sum_{j\in\mathcal{N}_j}q_j-\\
        &a^2\tilde{\beta}^2\sum_{\{j,k\in\mathcal{N}_i|j<k\}}x_jx_k-\\
        &\tilde{\beta}(1-\tilde{\delta})a^2x_i\sum_{j\in\mathcal{N}_i}x_j+o(a^2)
        \end{aligned}
    \end{equation}
    where the eigenvalue equation indicates that $\sum_{j\in\mathcal{N}_i}x_j=\lambda_1x_i$.

    In vector form, (\ref{eq_temp5}) is,
    \begin{equation}\label{eq_temp6_1}
    \begin{aligned}
      \tilde{\delta} a x+\tilde{\delta} o(a) q=&\tilde{\beta} aAx+\tilde{\beta} o(a) Aq-\\
      &a^2\tilde{\beta}^2\text{vec}\left(\sum_{\{j,k\in\mathcal{N}_i|j<k\}}x_jx_k\right)-\\
      &\tilde{\beta}(1-\tilde{\delta})a^2\text{vec}\left(\lambda_1x_i^2\right)+o(a^2)h
      \end{aligned}
    \end{equation}
    where the vector $\text{vec}(z_i)\triangleq [z_1,\ldots,z_N]^T$.
    Divide both sides of (\ref{eq_temp6_1}) by $a\tilde{\beta}$ and recall that $Ax=\lambda_1 x$, and we have
    \begin{equation}\label{eq_temp6}
    \begin{aligned}
      \frac{\tilde{\delta}}{\tilde{\beta}} x+\frac{\tilde{\delta}}{\tilde{\beta}} \frac{o(a)}{a} q=& \lambda_1 x+\frac{o(a)}{a} Aq-\\
      &a\tilde{\beta}\text{vec}\left(\sum_{\{j,k\in\mathcal{N}_i|j<k\}}x_jx_k\right)-\\
      &a(1-\tilde{\delta})\text{vec}\left(\lambda_1x_i^2\right)+\frac{o(a^2)}{a}h
      \end{aligned}
    \end{equation}
    Rearranging (\ref{eq_temp6}), we obtain
    \begin{equation}\label{eq_temp7}
    \begin{aligned}
      \left(\lambda_1-\frac{\tilde{\delta}}{\tilde{\beta}}\right) x-\frac{\tilde{\delta}}{\tilde{\beta}} \frac{o(a)}{a} q= & -\frac{o(a)}{a} Aq+\\
      &a\Bigg[\tilde{\beta}\text{vec}\left(\sum_{\{j,k\in\mathcal{N}_i|j<k\}}x_jx_k\right)+\\
      &(1-\tilde{\delta})\text{vec}\left(\lambda_1x_i^2\right)\Bigg]+\frac{o(a^2)}{a}h
      \end{aligned}
    \end{equation}
    Since $a\rightarrow0$ as $(\lambda_1-\tilde{\delta}/\tilde{\beta})\rightarrow0$, we assume
     \begin{equation}\label{eq_a_and_a1_relation}
     	a=a_1(\lambda_1-\tilde{\delta}/\tilde{\beta})+o(\lambda_1-\tilde{\delta}/\tilde{\beta})
     \end{equation}
    and substitute $a$ into (\ref{eq_temp7}),
    \begin{equation}\label{eq_temp8}
    \begin{aligned}
      \left(\lambda_1-\frac{\tilde{\delta}}{\tilde{\beta}}\right) x-\frac{\tilde{\delta}}{\tilde{\beta}} \frac{o(a)}{a} q= & -\frac{o(a)}{a} Aq+\\
      &a_1\left(\lambda_1-\frac{\tilde{\delta}}{\tilde{\beta}}\right)d(\tilde{\beta},\tilde{\delta})+\\
      &o(a)d(\tilde{\beta},\tilde{\delta})+\frac{o(a^2)}{a}h
      \end{aligned}
    \end{equation}
    where $$d(\tilde{\beta},\tilde{\delta})=\left(\tilde{\beta}\text{vec}\left(\sum_{\{j,k\in\mathcal{N}_i|j<k\}}x_jx_k\right)+(1-\tilde{\delta})\text{vec}\left(\lambda_1x_i^2\right)\right).$$ We divide both side by $\lambda_1-\tilde{\delta}/\tilde{\beta}$,
    \begin{equation}\label{eq_temp9}
    \begin{aligned}
       x-\frac{\tilde{\delta}}{\tilde{\beta}} \frac{o(a)}{a}\frac{1}{\left(\lambda_1-\frac{\tilde{\delta}}{\tilde{\beta}}\right)} q= & -\frac{o(a)}{a}\frac{1}{\left(\lambda_1-\frac{\tilde{\delta}}{\tilde{\beta}}\right)} Aq+\\
       &a_1d(\tilde{\beta},\tilde{\delta})+\frac{o(a)}{\left(\lambda_1-\frac{\tilde{\delta}}{\tilde{\beta}}\right)}d(\tilde{\beta},\tilde{\delta})+\\
       &\frac{o(a^2)}{a\left(\lambda_1-\frac{\tilde{\delta}}{\tilde{\beta}}\right)}h
       \end{aligned}
    \end{equation}
    By taking the scalar product with $x$ on both sides of Eq.~(\ref{eq_temp9}) and recalling that the vector $q$ is orthogonal to the eigenvector $x$, we obtain
    \begin{equation}\label{eq_temp10}
       1=a_1d(\tilde{\beta},\tilde{\delta})\cdot x+\frac{o(a)}{\left(\lambda_1-\frac{\tilde{\delta}}{\tilde{\beta}}\right)}d(\tilde{\beta},\tilde{\delta})\cdot x+\frac{o(a^2)}{a\left(\lambda_1-\frac{\tilde{\delta}}{\tilde{\beta}}\right)}h\cdot x
    \end{equation}
    When $a\rightarrow0$, Eq.~(\ref{eq_temp10}) becomes
    \begin{equation}\label{eq_a_toward_0_eq}
    	1=a_1d\left(\frac{1}{\lambda_1}\tilde{\delta},\tilde{\delta}\right)\cdot x
    \end{equation}

    In the bursty SIS case where $\lim\limits_{\tau\downarrow\tau_c^{(B)}}\tilde{\delta}=\frac{\lambda_1}{\lambda_1+1}$, Eq.~(\ref{eq_a_toward_0_eq}) reads
    \begin{equation*}\label{eq_temp16}
       1=a_1d\left(\frac{1}{\lambda_1+1},\frac{\lambda_1}{\lambda_1+1}\right)\cdot x
    \end{equation*}
    Thus,
    \begin{equation*}
      a_1=\frac{\lambda_1+1}{\sum_{i=1}^Nx_i\sum_{\{j,k\in\mathcal{N}_i|j<k\}}x_jx_k+\lambda_1\sum_{i=1}^Nx_i^3}
    \end{equation*}
    Using Lemma~\ref{lemma}, $a_1$ becomes
    \begin{equation*}
      a_1=\frac{2}{\lambda_1\sum_{i=1}^{N}x_i^3}
    \end{equation*}
    We assume $a=a_2\epsilon+o(\epsilon)$ where $\epsilon=\tau-\tau_c^{(B)}=\tau-\frac{1}{\ln(\lambda_1+1)}$ and we may verify that
    $$\begin{aligned}\left.\frac{\mathrm{d}(\lambda_1-\tilde{\delta}/\tilde{\beta})}{\mathrm{d}\epsilon}\right|_{\epsilon=0}&=\left.\frac{\mathrm{d}(\lambda_1+1-\mathrm{e}^{1/\tau})}{\mathrm{d}\epsilon}\right|_{\epsilon=0}\\&=(\lambda_1+1)\ln^2(\lambda_1+1)\end{aligned}$$
    then we obtain
    \begin{equation*}\label{eq_temp18}
      a_2=a_1(\lambda_1+1)\ln^2(\lambda_1+1)=\frac{2(\lambda_1+1)\ln^2(\lambda_1+1)}{\lambda_1\sum_{i=1}^{N}x_i^3}
    \end{equation*}
    Thus, the maximum prevalence is $\frac{a_2\sum_{i=1}^{N}x_i}{N}\left(\tau-\tau_c^{(B)}\right)+o(\tau-\tau_c^{(B)})$.

    After normalizing the effective infection rate by $\tau/\tau_c^{(B)}$ and defining $\tilde{\tau}=\tau/\tau_c^{(B)}-1$, we finally find the maximum prevalence as
    \begin{eqnarray}\label{eq_temp23}
      y_{\infty}^+(\tilde{\tau})&=&\frac{a_2\tau_c^{(B)}\sum_{i=1}^{N}x_i}{N}\tilde{\tau}+o\left(\tilde{\tau}\right)\nonumber\\
      &=&\frac{2(\lambda_1+1)\ln(\lambda_1+1)\sum_{i=1}^{N}x_i}{N\lambda_1\sum_{i=1}^{N}x_i^3}\tilde{\tau}+o\left(\tilde{\tau}\right)
    \end{eqnarray}
    For general $t^*$, the prevalence is $\exp(-\delta t^*)y_{\infty}^+(\tilde{\tau})$ and then the minimum prevalence is $y_{\infty}^-(\tilde{\tau})=y_{\infty}^+(\tilde{\tau})/(\lambda_1+1)$ as $t^*\rightarrow 1/\beta$.
\end{proof}
\section{The bounds of $a$}\label{append:boundofa}
By the Perron-Frobenius theorem, every component of the principal eigenvector is positive. The lower bound of $a$ is derived follows.
\begin{equation*}
  a=\frac{\sum_{i=1}^{N}x_i}{N\sum_{i=1}^{N}x_i^3}\geq \frac{N\min\limits_i x_i}{N\max\limits_i x_i\sum_{j=1}^{N}x_j^2}=\frac{\min\limits_i x_i}{\max\limits_i x_i}
\end{equation*}
For the upper bound, using the Cauchy–Schwarz inequality $(\sum_{i=1}^{N}x_i)^2\leq N\sum_{i=1}^{N}=N$, we obtain
\begin{equation*}
  a=\frac{\sum_{i=1}^{N}x_i}{N\sum_{i=1}^{N}x_i^3}\leq \frac{\sqrt{N}}{N\min\limits_ix_i}=\frac{1}{\sqrt{N}\min\limits_ix_i}
\end{equation*}
The bound is tight when the network is a regular graph.
\section{The coefficients of star graphs}\label{append:star}
We may verify that the largest eigenvalue of the star graph is $\sqrt{N-1}$ and the principle eigenvector is $x=[\frac{1}{\sqrt{2}},\ldots,\frac{1}{\sqrt{2(N-1)}}]^T$. We have following results
\begin{eqnarray*}
  a=\frac{\sum_{i=1}^{N}x_i}{N\sum_{i=1}^{N}x_i^3} &=&\frac{1}{\sqrt{N}}+o(\frac{1}{\sqrt{N}}) \\
  a_{\text{max}}&=&\frac{2}{N}\frac{(\lambda_1+1)\ln(\lambda_1+1)\sum_{i=1}^{N}x_i}{\lambda_1\sum_{i=1}^{N}x_i^3} \\&=&\frac{\ln(\sqrt{N})}{\sqrt{N}}+o(N^{-\frac{1}{2}}\ln N) \\
  a_{\text{min}}&=&\frac{2}{N}\frac{\ln(\lambda_1+1)\sum_{i=1}^{N}x_i}{\lambda_1\sum_{i=1}^{N}x_i^3} \\&=&\frac{\ln(\sqrt{N})}{N}+o(N^{-1}\ln N)
\end{eqnarray*}
\section{The coefficients of $d$-regular graphs}\label{append:regular}
For regular graph, the principal eigenvector is $x=\frac{1}{\sqrt{N}}u$ where $u$ is all-one vector and the largest eigenvalue is $d$. We may verify that
\begin{eqnarray*}
  a=\frac{\sum_{i=1}^{N}x_i}{N\sum_{i=1}^{N}x_i^3} &=& 1 \\
  a_{\text{max}}=\frac{2}{N}\frac{(\lambda_1+1)\ln(\lambda_1+1)\sum_{i=1}^{N}x_i}{\lambda_1\sum_{i=1}^{N}x_i^3} &=& 2\left(1+\frac{1}{d}\right)\ln(d+1) \\
  a_{\text{min}}=\frac{2}{N}\frac{\ln(\lambda_1+1)\sum_{i=1}^{N}x_i}{\lambda_1\sum_{i=1}^{N}x_i^3} &=& \frac{2\ln(d+1)}{d}
\end{eqnarray*}
\section{Real networks}
The parameters of the real and synthetic networks are listed in Table~\ref{tab_real_networks}. The degree distributions are plotted in Fig.~\ref{cond-mat2005} to Fig.~\ref{EmailEnron}.

\begin{table*}
  \centering
  \begin{tabular}{|p{3cm}|p{5cm}|p{5cm}|}
  \hline
    Networks & Parameters of real networks & Parameters of corresponding synthetic networks\\
    \hline
    \hline
    Cond-mat 2005 \cite{newman2001structure}& $N=36458$, $d_{av}=9.4210$, $a_{\text{max}}=0.1199$& $N=36811$, $d_{av}= 9.4483$, $a_{\text{max}}=0.1301$\\
    \hline
    astro-ph \cite{newman2001structure}& $N=14845$, $d_{av}=16.1202$, $a_{\text{max}}=0.3024$& $N=14766$, $d_{av}= 16.2536$, $a_{\text{max}}=0.5352$\\
    \hline
    Internet \cite{marknewmannet} & $N=22963$, $d_{av}=4.2186$, $a_{\text{max}}=0.2155$& $N=22354$, $d_{av}= 4.2804$, $a_{\text{max}}=0.1903$\\
    \hline
    hep-th \cite{newman2001structure}& $N=5835$, $d_{av}=4.7352$, $a_{\text{max}}=0.0218$& $N=5944$, $d_{av}= 4.5855$, $a_{\text{max}}=0.1063$\\
    \hline
    Email-URV \cite{guimera2003self}& $N=1133$, $d_{av}=9.6222$, $a_{\text{max}}=1.3713$& $N=1178$, $d_{av}= 9.6774$, $a_{\text{max}}=0.9539$\\
    \hline
    PGP \cite{boguna2004models}& $N=10680$, $d_{av}=4.5536$, $a_{\text{max}}=0.0789$& $N=10986$, $d_{av}= 4.4773$, $a_{\text{max}}=0.2104$ \\
    \hline
    Email-Enron \cite{snapnets}& $N=33696$, $d_{av}=10.7319$, $a_{\text{max}}=0.3037$& $N=33632$, $d_{av}= 10.8451$, $a_{\text{max}}=0.3053$\\
    \hline
  \end{tabular}
  \caption{The parameters of real networks and the corresponding synthetic networks. Only the largest connected components are preserved and all the networks are connected.}\label{tab_real_networks}
\end{table*}

\begin{figure}
  \centering
  \includegraphics[scale=0.5]{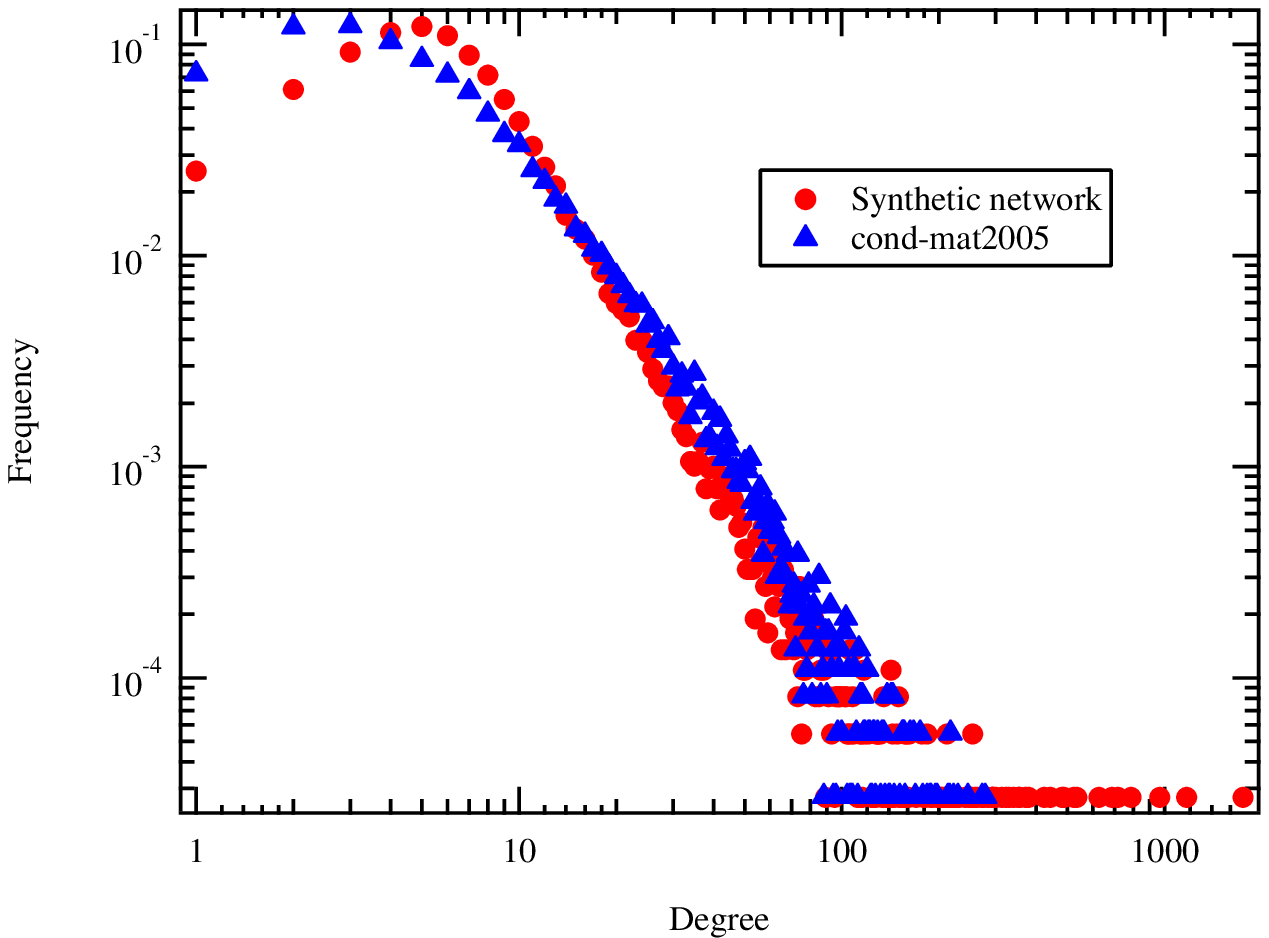}
  \caption{Cond-mat2005:: Collaboration network of scientists posting preprints on the condensed matter archive at arXiv, 1995-1999.}\label{cond-mat2005}
\end{figure}
\begin{figure}
  \centering
  \includegraphics[scale=0.5]{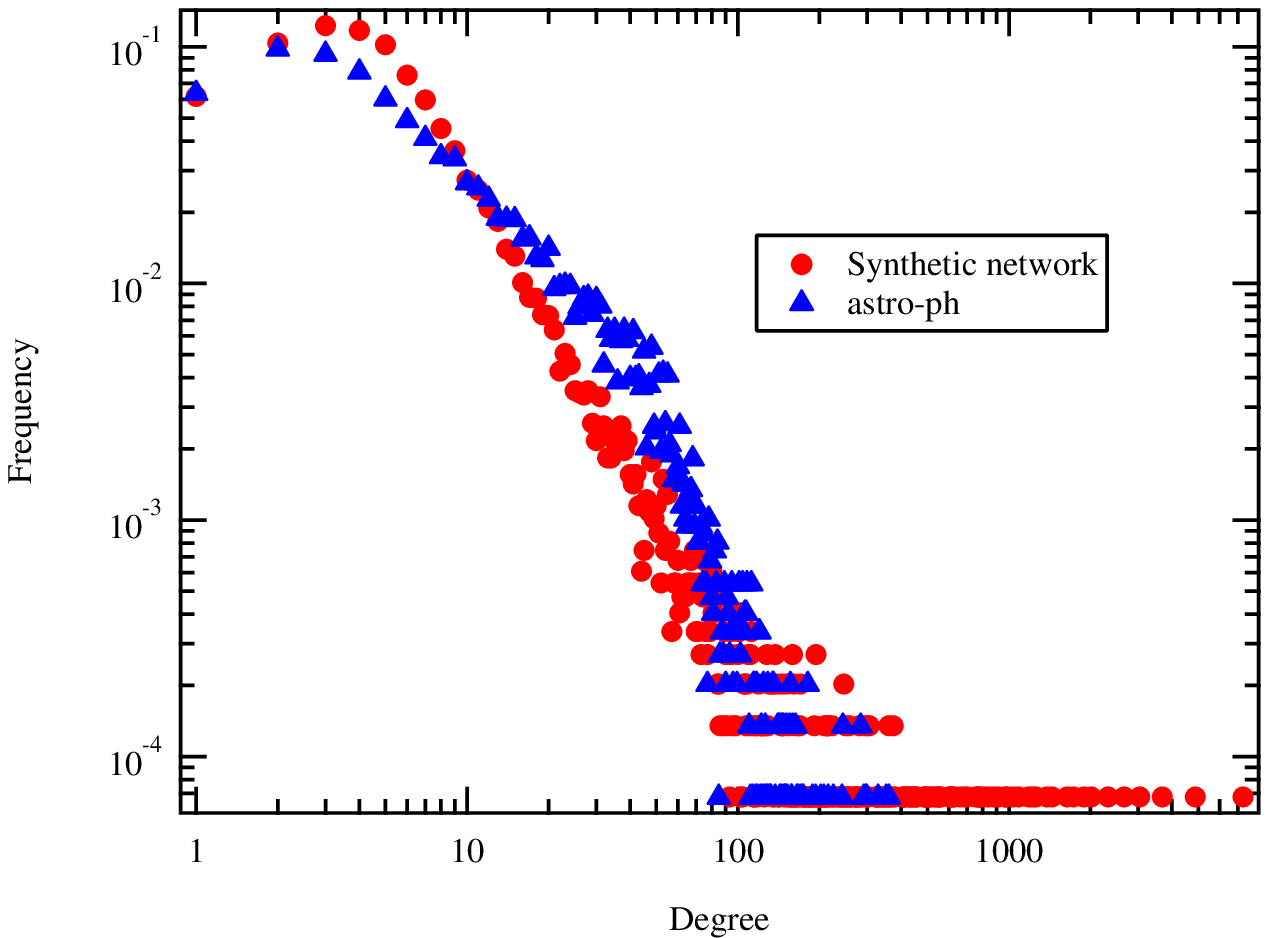}
  \caption{Astro-ph: Network of co-authorship between scientists posting preprints on the Astrophysics E-Print Archive between Jan 1, 1995 and December 31, 1999.}\label{astro-ph}
\end{figure}
\begin{figure}
  \centering
  \includegraphics[scale=0.5]{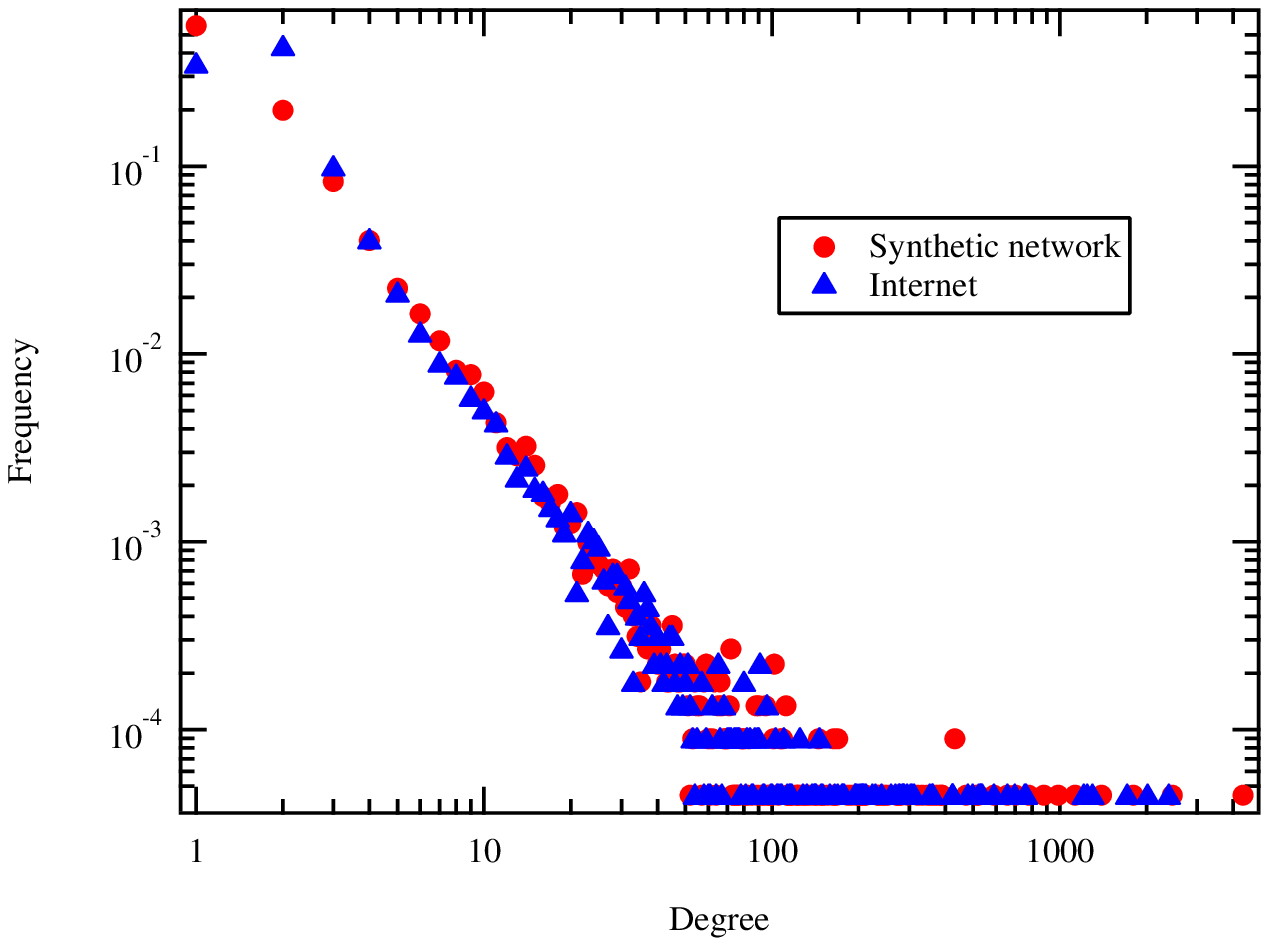}
  \caption{Internet: a symmetrized snapshot of the structure of the Internet at the level of autonomous systems, reconstructed from BGP tables posted by the University of Oregon Route Views Project.}\label{Internet}
\end{figure}
\begin{figure}
  \centering
  \includegraphics[scale=0.5]{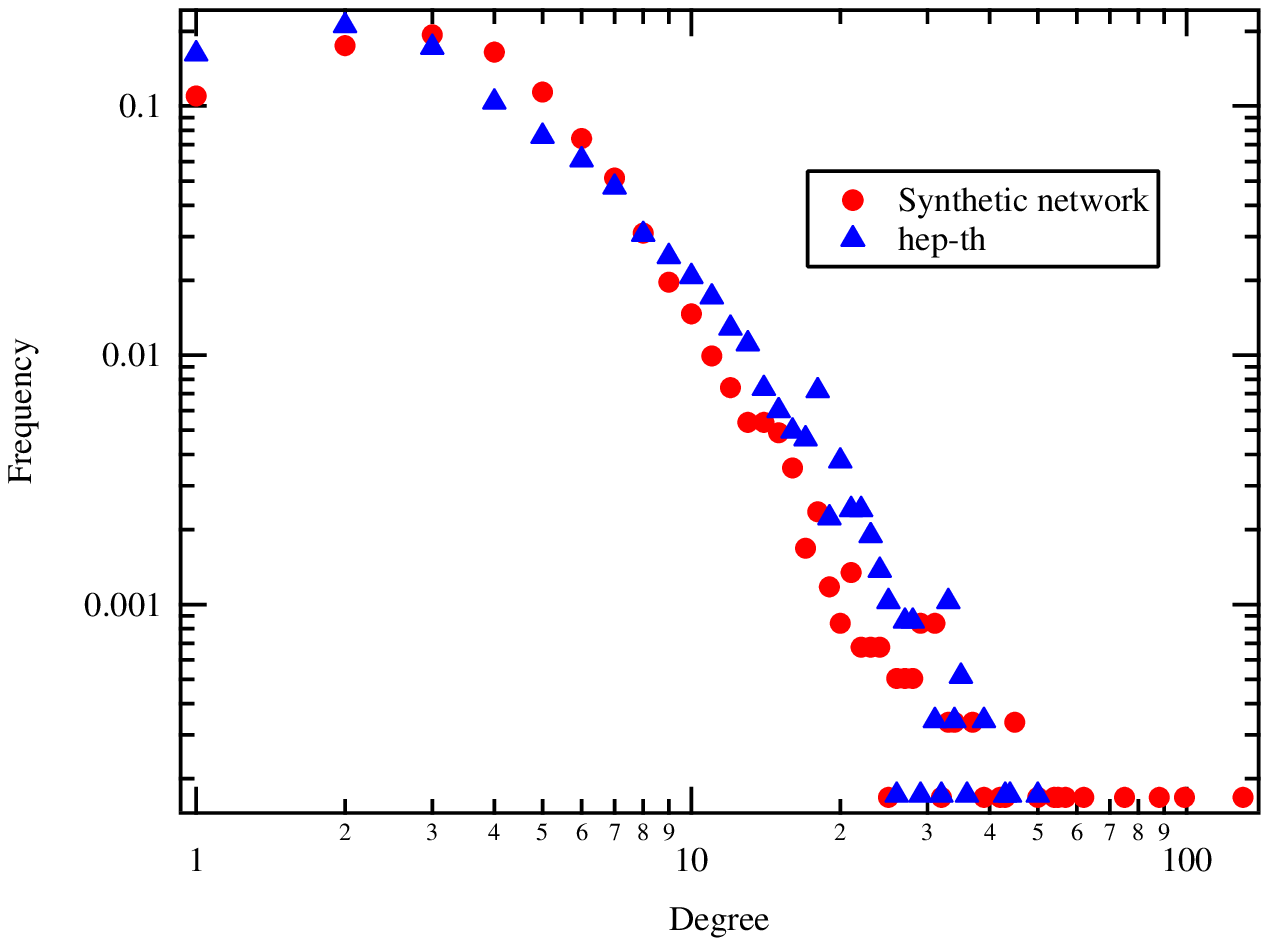}
  \caption{Hep-th: Network of co-authorship between scientists posting preprints on the High-Energy Theory arXiv between Jan 1, 1995 and December 31, 1999.}\label{hepth}
\end{figure}
\begin{figure}
  \centering
  \includegraphics[scale=0.5]{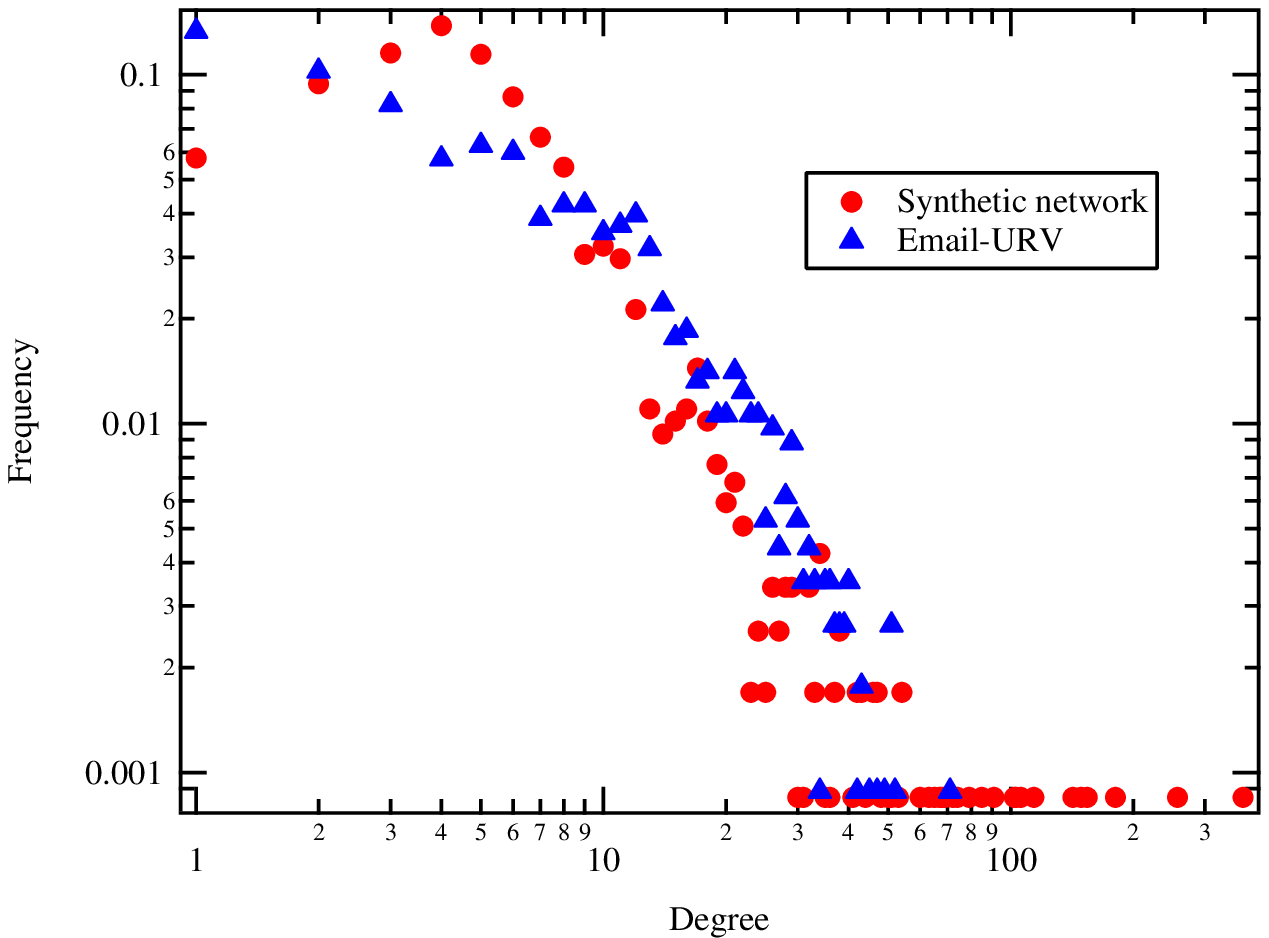}
  \caption{Email-URV: Network of E-mail interchanges between members of the Univeristy Rovira i Virgili, Tarragona.}\label{EmailURV}
\end{figure}
\begin{figure}
  \centering
  \includegraphics[scale=0.5]{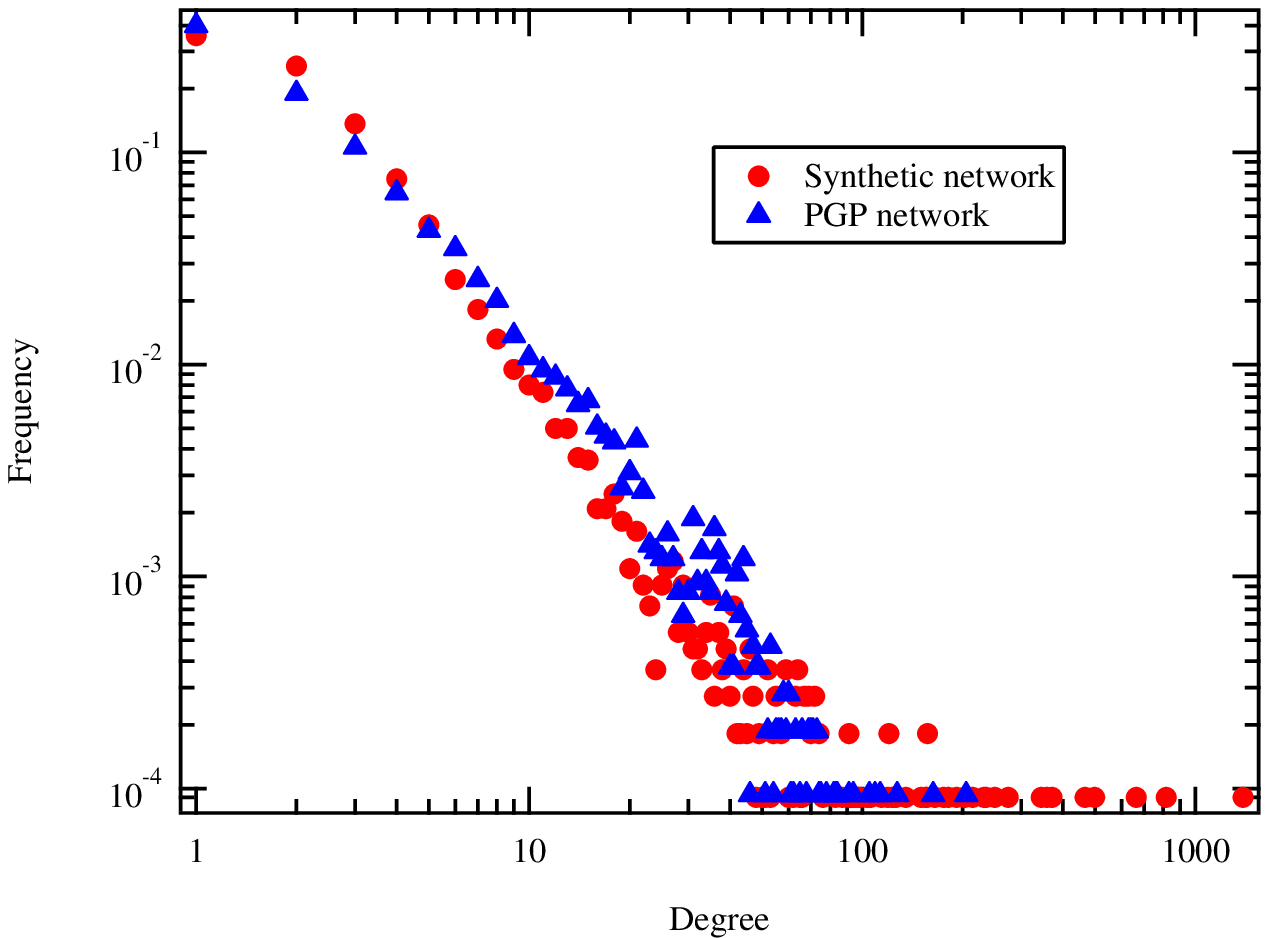}
  \caption{PGP: Network of users of the Pretty-Good-Privacy algorithm for secure information interchange.}\label{PGP}
\end{figure}
\begin{figure}
  \centering
  \includegraphics[scale=0.5]{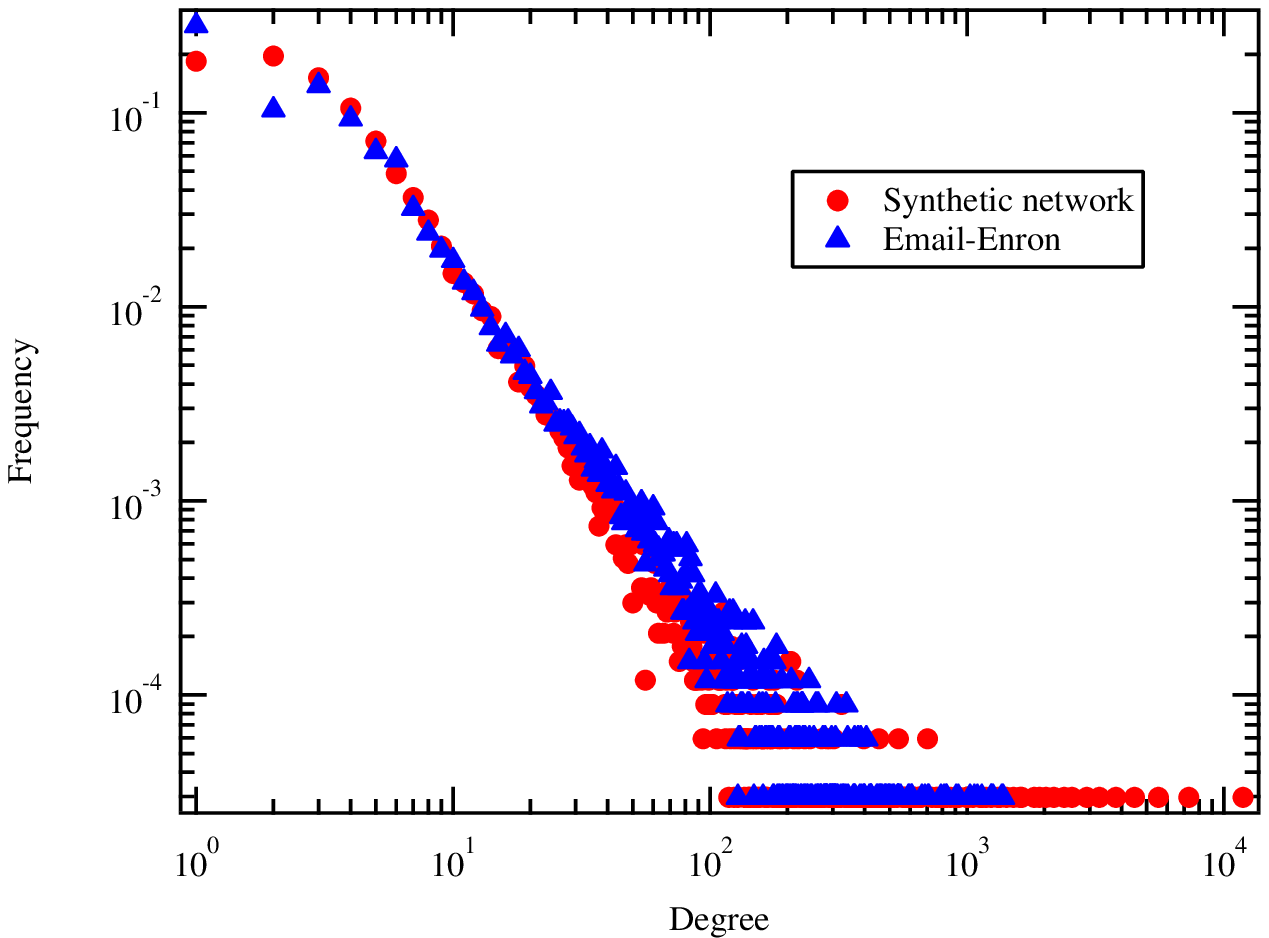}
  \caption{Email-Enron: Enron email communication network.}\label{EmailEnron}
\end{figure}

\bibliographystyle{IEEEtran}
\bibliography{bibitem}